\title{MIMO Broadcasting for Simultaneous Wireless Information and Power Transfer\footnote{This paper has been presented in part at IEEE
Global Communications Conference (Globecom), December 5-9, 2011,
Houston, USA.} \footnote{R. Zhang is with the Department of
Electrical and Computer Engineering, National University of
Singapore (e-mail:elezhang@nus.edu.sg). He is also with the
Institute for Infocomm Research, A*STAR, Singapore.} \footnote{C. K.
Ho is with the Institute for Infocomm Research, A*STAR, Singapore
(e-mail:hock@i2r.a-star.edu.sg).}}
\author{Rui Zhang and Chin Keong Ho}
\begin{document}
\maketitle \thispagestyle{empty}

\begin{abstract}
Wireless power transfer (WPT) is a promising new solution to provide
convenient and perpetual energy supplies to wireless networks. In
practice, WPT is implementable by various technologies such as
inductive coupling, magnetic resonate coupling, and electromagnetic
(EM) radiation, for short-/mid-/long-range applications,
respectively. In this paper, we consider the EM or radio signal
enabled WPT in particular. Since radio signals can carry energy as
well as information at the same time, a unified study on
\emph{simultaneous wireless information and power transfer} (SWIPT)
is pursued. Specifically, this paper studies a multiple-input
multiple-output (MIMO) wireless broadcast system consisting of three
nodes, where one receiver harvests energy and another receiver
decodes information separately from the signals sent by a common
transmitter, and all the transmitter and receivers may be equipped
with multiple antennas. Two scenarios are examined, in which the
information receiver and energy receiver are \emph{separated} and
see different MIMO channels from the transmitter, or
\emph{co-located} and see the identical MIMO channel from the
transmitter. For the case of separated receivers, we derive the
optimal transmission strategy to achieve different tradeoffs for
maximal information rate versus energy transfer, which are
characterized by the boundary of a so-called \emph{rate-energy}
(R-E) region. For the case of co-located receivers, we show an outer
bound for the achievable R-E region due to the potential limitation
that practical energy harvesting receivers are not yet able to
decode information directly. Under this constraint, we investigate
two practical designs for the co-located receiver case, namely
\emph{time switching} and \emph{power splitting}, and characterize
their achievable R-E regions in comparison to the outer bound.
\end{abstract}

\begin{keywords}
MIMO system, broadcast channel, precoding, wireless power,
simultaneous wireless information and power transfer (SWIPT),
rate-energy tradeoff, energy harvesting.
\end{keywords}

\setlength{\baselineskip}{1.3\baselineskip}
\newtheorem{definition}{\underline{Definition}}[section]
\newtheorem{fact}{Fact}
\newtheorem{assumption}{Assumption}
\newtheorem{theorem}{\underline{Theorem}}[section]
\newtheorem{lemma}{\underline{Lemma}}[section]
\newtheorem{corollary}{Corollary}[section]
\newtheorem{proposition}{\underline{Proposition}}[section]
\newtheorem{example}{\underline{Example}}[section]
\newtheorem{remark}{\underline{Remark}}[section]
\newtheorem{algorithm}{\underline{Algorithm}}[section]
\newcommand{\mv}[1]{\mbox{\boldmath{$ #1 $}}}
\newcommand{\smv}[1]{\small{\mbox{\boldmath{$ #1 $}}}}

\section{Introduction}

Energy-constrained wireless networks, such as sensor networks, are
typically powered by batteries that have limited operation time.
Although replacing or recharging the batteries can prolong the
lifetime of the network to a certain extent, it usually incurs high
costs and is inconvenient, hazardous (say, in toxic environments),
or even impossible (e.g., for sensors embedded in building
structures or inside human bodies). A more convenient, safer, as
well as ``greener'' alternative is thus to harvest energy from the
environment, which virtually provides perpetual energy supplies to
wireless devices. In addition to other commonly used energy sources
such as solar and wind, ambient radio-frequency (RF) signals can be
a viable new source for energy scavenging. It is worth noting that
RF-based energy harvesting is typically suitable for low-power
applications (e.g., sensor networks), but also can be applied for
scenarios with more substantial power consumptions if dedicated
wireless power transmission is implemented.\footnote{Interested
readers may visit the company website of Powercast at
http://www.powercastco.com/ for more information on recent
applications of dedicated RF-based power transfer.}

On the other hand, since RF signals that carry energy can at the
same time be used as a vehicle for transporting information,
\emph{simultaneous wireless information and power transfer} (SWIPT)
becomes an interesting new area of research that attracts increasing
attention. Although a unified study on this topic is still in the
infancy stage, there have been notable results reported in the
literature \cite{Varshney,Grover}. In \cite{Varshney}, Varshney
first proposed a \emph{capacity-energy} function to characterize the
fundamental tradeoffs in simultaneous information and energy
transfer. For the single-antenna or SISO (single-input
single-output) AWGN (additive white Gaussian noise) channel with
amplitude-constrained inputs, it was shown in \cite{Varshney} that
there exist nontrivial tradeoffs in maximizing information rate
versus (vs.) power transfer by optimizing the input distribution.
However, if the average transmit-power constraint is considered
instead, the above two goals can be shown to be aligned for the SISO
AWGN channel with Gaussian input signals, and thus there is no
nontrivial tradeoff. In \cite{Grover}, Grover and Sahai extended
\cite{Varshney} to frequency-selective single-antenna AWGN channels
with the average power constraint, by showing that a non-trivial
tradeoff exists in frequency-domain power allocation for maximal
information vs. energy transfer.

As a matter of fact, \emph{wireless power transfer} (WPT) or in
short \emph{wireless power}, which generally refers to the
transmissions of electrical energy from a power source to one or
more electrical loads without any interconnecting wires,  has been
investigated and implemented with a long history. Generally
speaking, WPT is carried out using either the ``near-field''
electromagnetic (EM) induction (e.g., inductive coupling, capacitive
coupling) for short-distance (say, less than a meter) applications
such as passive radio-frequency identification (RFID) \cite{Want},
or the ``far-field'' EM radiation in the form of microwaves or
lasers for long-range (up to a few kilometers) applications such as
the transmissions of energy from orbiting solar power satellites to
Earth or spacecrafts \cite{NASA}. However, prior research on EM
radiation based WPT, in particular over the RF band, has been
pursued independently from that on wireless information transfer
(WIT) or radio communication. This is non-surprising since these two
lines of work in general have very different research goals: WIT is
to maximize the {\it information transmission capacity} of wireless
channels subject to channel impairments such as the fading and
receiver noise, while WPT is to maximize the {\it energy
transmission efficiency} (defined as the ratio of the energy
harvested and stored at the receiver to that consumed by the
transmitter) over a wireless medium. Nevertheless, it is worth
noting that the design objectives for WPT and WIT systems can be
aligned, since given a transmitter energy budget, maximizing the
signal power received (for WPT) is also beneficial in maximizing the
channel capacity (for WIT) against the receiver noise.

Hence, in this paper we attempt to pursue a unified study on WIT and
WPT for emerging wireless applications with such a dual usage. An
example of such wireless dual networks is envisaged in Fig.
\ref{fig:system model new}, where a fixed access point (AP)
coordinates the two-way communications to/from a set of distributed
user terminals (UTs). However, unlike the conventional wireless
network in which both the AP and UTs draw energy from constant power
supplies (by e.g. connecting to the grid or a battery), in our
model, only the AP is assumed to have a constant power source, while
all UTs need to replenish energy from the received signals sent by
the AP via the far-field RF-based WPT. Consequently, the AP needs to
coordinate the wireless information and energy transfer to UTs in
the downlink, in addition to the information transfer from UTs in
the uplink. Wireless networks with such a dual information and power
transfer feature have not yet been studied in the literature to our
best knowledge, although some of their interesting applications have
already appeared in, e.g., the body sensor networks \cite{FeiZhang}
with the out-body local processing units (LPUs) powered by battery
communicating and at the same time sending wireless power to in-body
sensors that have no embedded power supplies. However, how to
characterize the fundamental information-energy transmission
tradeoff in such dual networks is still an open problem.

\begin{figure}
\begin{center}
\scalebox{0.6}{\includegraphics*[81pt,438pt][484pt,672pt]{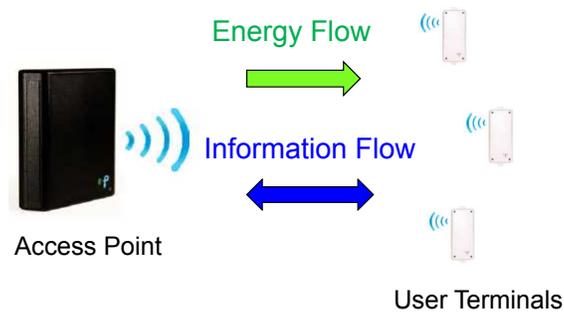}}
\end{center}
\caption{A wireless network with dual information and energy
transfer.}\label{fig:system model new}
\end{figure}

In this paper, we focus our study on the downlink case with
simultaneous WIT and WPT from the AP to UTs. In the generic system
model depicted in Fig. \ref{fig:system model new}, each UT can in
general harvest energy and decode information at the same time (by
e.g. applying the power splitting scheme introduced later in this
paper). However, from an implementation viewpoint, one particular
design whereby each UT operates as either an information receiver or
an energy receiver at any given time may be desirable, which is
referred to as {\it time switching}. This scheme is practically
appealing since state-of-the-art wireless information and energy
receivers are typically designed to operate separately with very
different power sensitivities (e.g., $-50$dBm for information
receivers vs. $-10$dBm for energy receivers). As a result, if time
switching is employed at each UT jointly with the ``near-far'' based
transmission scheduling at the AP, i.e., UTs that are close to the
AP and thus receive high power from the AP are scheduled for WET,
whereas those that are more distant from the AP and thus receive
lower power are scheduled for WIT, then SWIPT systems can be
efficiently implemented with existing information and energy
receivers and the additional time-switching device at each receiver.

For an initial study on SWIPT, this paper considers the simplified
scenarios with only one or two active UTs in the network at any
given time. For the case of two UTs, we assume time switching, i.e.,
the two UTs take turns to receive energy or (independent)
information from the AP over different time blocks. As a result,
when one UT receives information from the AP, the other UT can
opportunistically harvest energy from the same signal broadcast by
the AP, and vice versa. Hence, at each block, one UT operates as an
information decoding (ID) receiver, and the other UT as an energy
harvesting (EH) receiver. We thus refer to this case as {\it
separated EH and ID receivers}. On the other hand, for the case with
only one single UT to be active at one time (while all other UTs are
assumed to be in the off/sleep mode), the active UT needs to harvest
energy as well as decode information from the same signal sent by
the AP, i.e., the same set of receiving antennas are shared by both
EH and ID receivers residing in the same UT. Thus, this case is
referred to as {\it co-located EH and ID receivers}. Surprisingly,
as we will show later in this paper, the optimal information-energy
tradeoff for the case of co-located receivers is more challenging to
characterize than that for the case of separated receivers, due to a
potential limitation that practical EH receiver circuits are not yet
able to decode the information directly and vice versa. Note that
similar to the case of separated receives, time switching can also
be applied in the case of co-located receivers to orthogonalize the
information and energy transmissions at each receiving antenna;
however, this scheme is in general suboptimal for the achievable
rate-energy tradeoffs in the case of co-located receivers, as will
be shown later in this paper.

Some further assumptions are made in this paper for the purpose of
exposition. Firstly, this paper considers a quasi-static fading
environment where the wireless channel between the AP and each UT is
assumed to be constant over a sufficiently long period of time
during which the number of transmitted symbols can be approximately
regarded as being infinitely large. Under this assumption, we
further assume that it is feasible for each UT to estimate the
downlink channel from the AP and then send it back to the AP via the
uplink, since the time overhead for such channel estimation and
feedback is a negligible portion of the total transmission time due
to quasi-static fading. We will address the more general case of
fading channels with imperfect/partial channel knowledge at the
transmitter in our future work. Secondly, we assume that the system
under our study typically operates at the high signal-to-noise ratio
(SNR) regime for the ID receiver in the case of co-located
receivers. This is to be compatible with the high-power operating
requirement for the EH receiver of practical interest as previously
mentioned. Thirdly, without loss of generality, we assume a
multi-antenna or MIMO (multiple-input multiple-output) system, in
which the AP is equipped with multiple antennas, and each UT is
equipped with one or more antennas, for enabling both the
high-performance wireless energy and information transmissions (as
it is well known that for WIT only, MIMO systems can achieve folded
array/capacity gains over SISO systems by spatial
beamforming/multiplexing \cite{Telater}).

\begin{figure}
\begin{center}
\scalebox{0.6}{\includegraphics*[48pt,410pt][418pt,695pt]{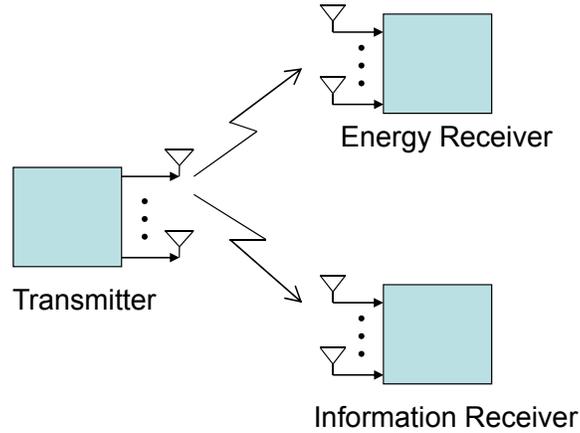}}
\end{center}
\caption{A MIMO broadcast system for simultaneous wireless
information and power transfer.}\label{fig:system model}
\end{figure}

Under the above assumptions, a three-node MIMO broadcast system is
considered in this paper, as shown in Fig. \ref{fig:system model},
wherein the EH and ID receivers harvest energy and decode
information separately from the signal sent by a common transmitter.
Note that this system model refers to the case of separated EH and
ID receivers in general, but includes the co-located receivers as a
special case when the MIMO channels from the transmitter to both
receivers become identical. Assuming this model, the main results of
this paper are summarized as follows:
\begin{itemize}
\item For the case of separated EH and ID receivers, we
design the optimal transmission strategy to achieve different
tradeoffs between maximal information rate vs. energy transfer,
which are characterized by the boundary of a so-called
\emph{rate-energy} (R-E) region. We derive a semi-closed-form
expression for the optimal transmit covariance matrix (for the joint
precoding and power allocation) to achieve different rate-energy
pairs on the boundary of the R-E region. Note that the R-E region is
a multiuser extension of the single-user capacity-energy function in
\cite{Varshney}. Also note that the multi-antenna broadcast channel
(BC) has been investigated in e.g. \cite{Carie}--\cite{Luo2} for
information transfer solely by unicasting or multicasting. However,
MIMO-BC for SWIPT as considered in this paper is new and has not yet
been studied by any prior work.

\item For the case of co-located EH and ID receivers, we show that the proposed
solution for the case of separated receivers is also applicable with
the identical MIMO channel from the transmitter to both ID and EH
receivers. Furthermore, we consider a potential practical constraint
that EH receiver circuits cannot directly decode the information
(i.e., any information embedded in received signals sent to the EH
receiver is lost during the EH process). Under this constraint, we
show that the R-E region with the optimal transmit covariance
(obtained without such a constraint) in general only serves as a
performance outer bound for the co-located receiver case.

\item Hence, we investigate two practical receiver designs,
namely \emph{time switching} and \emph{power splitting}, for the
case of co-located receivers. As shown in Fig. \ref{fig:practical
schemes}, for time switching, each receiving antenna periodically
switches between the EH receiver and ID receiver, whereas for power
splitting, the received signal at each antenna is split into two
separate signal streams with different power levels, one sent to the
EH receiver and the other to the ID receiver.  Note that time
switching has also been proposed in \cite{Varshney Thesis} for the
SISO AWGN channel. Furthermore, note that the \emph{antenna
switching} scheme whereby the receiving antennas are divided into
two groups with one group switched to information decoding and the
other group to energy harvesting can be regarded as a special case
of power splitting with only binary splitting power ratios at each
receiving antenna. For these practical receiver designs, we derive
their achievable R-E regions as compared to the R-E region outer
bound, and characterize the conditions under which their performance
gaps can be closed. For example, we show that the power splitting
scheme approaches the tradeoff upper bound asymptotically when the
RF-band antenna noise at the receiver becomes more dominant over the
baseband processing noise (more details are given in Section IV-C).
\end{itemize}

\begin{figure}
\begin{center}
\scalebox{0.6}{\includegraphics*[70pt,545pt][530pt,725pt]{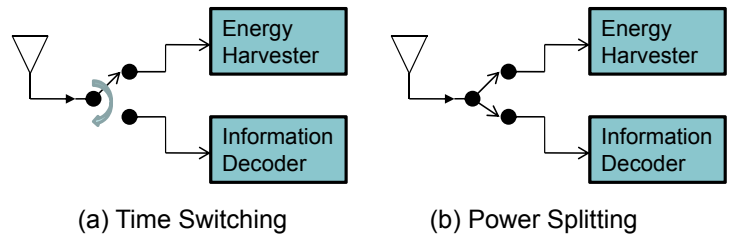}}
\end{center}
\caption{Two practical designs for the co-located energy and
information receivers, which are applied for each receiving antenna.
}\label{fig:practical schemes}
\end{figure}

The rest of this paper is organized as follows: Section
\ref{sec:system model} presents the system model, characterizes the
rate-energy region, and formulates the problem for finding the
optimal transmit covariance matrix. Section \ref{sec:separated
receivers} presents the optimal transmit covariance solution for the
case of separated receivers. Section \ref{sec:colocated receivers}
extends the solution to the case of co-located receivers to obtain a
performance upper bound, proposes practical receiver designs, and
analyzes their performance limits as compared to the performance
upper bound. Finally, Section \ref{sec:conclusion} concludes the
paper and provides some promising directions for future work.

{\it Notation}: For a square matrix $\mv{S}$, $\mathtt{tr}(\mv{S})$,
$|\mv{S}|$, $\mv{S}^{-1}$, and $\mv{S}^{\frac{1}{2}}$ denote its
trace, determinant, inverse, and square-root, respectively, while
$\mv{S}\succeq 0$ and $\mv{S}\succ 0$ mean that $\mv{S}$ is positive
semi-definite and positive definite, respectively. For an
arbitrary-size matrix $\mv{M}$, $\mv{M}^{H}$ and $\mv{M}^{T}$ denote
the conjugate transpose and transpose of $\mv{M}$, respectively.
$\mathtt{diag}(x_1, \ldots, x_M)$ denotes an $M \times M$ diagonal
matrix with $x_1,\ldots,x_M$ being the diagonal elements. $\mv{I}$
and $\mv{0}$ denote an identity matrix and an all-zero vector,
respectively, with appropriate dimensions. $\mathbb{E}[\cdot]$
denotes the statistical expectation. The distribution of a
circularly symmetric complex Gaussian (CSCG) random vector with mean
$\mv{x}$ and covariance matrix $\mv{\Sigma}$ is denoted by
$\mathcal{CN}(\mv{x},\mv{\Sigma})$, and $\sim$ stands for
``distributed as''. $\mathbb{C}^{x \times y}$ denotes the space of
$x\times y$ matrices with complex entries. $\|\mv{z}\|$ is the
Euclidean norm of a complex vector $\mv{z}$, and $|z|$ is the
absolute value of a complex scalar $z$. $\max(x,y)$ and $\min(x,y)$
denote the maximum and minimum between two real numbers, $x$ and
$y$, respectively, and $(x)^+=\max(x,0)$. All the $\log(\cdot)$
functions have base-2 by default.

\section{System Model And Problem Formulation}\label{sec:system model}

As shown in Fig. \ref{fig:system model}, this paper considers a
wireless broadcast system consisting of one transmitter, one EH
receiver, and one ID receiver. It is assumed that the transmitter is
equipped with $M\geq 1$ transmitting antennas, and the EH receiver
and the ID receiver are equipped with $N_{\rm EH}\geq 1$ and $N_{\rm
ID}\geq 1$ receiving antennas, respectively. In addition, it is
assumed that the transmitter and both receivers operate over the
same frequency band. Assuming a narrow-band transmission over
quasi-static fading channels, the baseband equivalent channels from
the transmitter to the EH receiver and ID receiver can be modeled by
matrices $\mv{G}\in\mathbb{C}^{N_{\rm EH}\times M}$ and
$\mv{H}\in\mathbb{C}^{N_{\rm ID}\times M}$, respectively. It is
assumed that at each fading state, $\mv{G}$ and $\mv{H}$ are both
known at the transmitter, and separately known at the corresponding
receiver. Note that for the case of co-located EH and ID receivers,
$\mv{G}$ is identical to $\mv{H}$ and thus $N_{\rm EH}=N_{\rm ID}$.

It is worth noting that the EH receiver does not need to convert the
received signal from the RF band to the baseband in order to harvest
the carried energy. Nevertheless, thanks to the law of energy
conservation, it can be assumed that the total harvested RF-band
power (energy normalized by the baseband symbol period), denoted by
$Q$, from all receiving antennas at the EH receiver is proportional
to that of the received baseband signal, i.e.,
\begin{align}\label{eq:power}
Q=\zeta\mathbb{E}[\|\mv{G}\mv{x}(n)\|^2]
\end{align}
where $\zeta$ is a constant that accounts for the loss in the energy
transducer for converting the harvested energy to electrical energy
to be stored; for the convenience of analysis, it is assumed that
$\zeta=1$ in this paper unless stated otherwise. We use
$\mv{x}(n)\in\mathbb{C}^{M\times 1}$ to denote the baseband signal
broadcast by the transmitter at the $n$th symbol interval, which is
assumed to be random over $n$, without loss of generality. The
expectation in (\ref{eq:power}) is thus used to compute the average
power harvested by the EH receiver at each fading state. Note that
for simplicity, we assumed in (\ref{eq:power}) that the harvested
energy due to the background noise at the EH receiver is negligible
and thus can be ignored.\footnote{The results of this paper are
readily extendible to study the impacts of non-negligible background
noise and/or co-channel interference on the SWIPT system
performance.}

On the other hand, the baseband transmission from the transmitter to
the ID receiver can be modeled by
\begin{align}\label{eq:signal model}
\mv{y}(n)=\mv{H}\mv{x}(n)+\mv{z}(n)
\end{align}
where $\mv{y}(n)\in\mathbb{C}^{N_{\rm ID}\times 1}$ denotes the
received signal at the $n$th symbol interval, and
$\mv{z}(n)\in\mathbb{C}^{N_{\rm ID} \times 1}$ denotes the receiver
noise vector. It is assumed that $\mv{z}(n)$'s are independent over
$n$ and $\mv{z}(n)\sim\mathcal{CN}(\mv{0},\mv{I})$. Under the
assumption that $\mv{x}(n)$ is random over $n$, we use
$\mv{S}=\mathbb{E}[\mv{x}(n)\mv{x}^H(n)]$ to denote the covariance
matrix of $\mv{x}(n)$. In addition, we assume that there is an
average power constraint at the transmitter across all transmitting
antennas denoted by
$\mathbb{E}[\|\mv{x}(n)\|^2]=\mathtt{tr}(\mv{S})\leq P$. In the
following, we examine the optimal transmit covariance $\mv{S}$ to
maximize the transported energy efficiency and information rate to
the EH and ID receivers, respectively.

Consider first the MIMO link from the transmitter to the EH receiver
when the ID receiver is not present. In this case, the design
objective for $\mv{S}$ is to maximize the power $Q$ received at the
EH receiver. Since from (\ref{eq:power}) it follows that
$Q=\mathtt{tr}(\mv{G}\mv{S}\mv{G}^H)$ with $\zeta=1$, the
aforementioned design problem can be formulated as
\begin{align}
\mbox{(P1)}~~\mathop{\mathtt{max}}_{\smv{S}} & ~~~
Q:=\mathtt{tr}\left(\mv{G}\mv{S}\mv{G}^H\right)
\nonumber \\
\mathtt{s.t.} & ~~~ \mathtt{tr}(\mv{S})\leq P, \mv{S}\succeq 0.
\nonumber
\end{align}
Let $T_1=\min(M,N_{\rm EH})$ and the (reduced) singular value
decomposition (SVD) of $\mv{G}$ be denoted by
$\mv{G}=\mv{U}_G\mv{\Gamma}_G^{1/2}\mv{V}_G^H$, where
$\mv{U}_G\in\mathbb{C}^{N_{\rm EH}\times T_1}$ and
$\mv{V}_G\in\mathbb{C}^{M\times T_1}$, each of which consists of
orthogonal columns with unit norm, and
$\mv{\Gamma}_G=\mathtt{diag}(g_1,\ldots,g_{T_1})$ with $g_1\geq
g_2\geq \ldots \geq g_{T_1}\geq 0$. Furthermore, let $\mv{v}_1$
denote the first column of $\mv{V}_G$. Then, we have the following
proposition.
\begin{proposition}\label{proposition:opt S EH}
The optimal solution to (P1) is $\mv{S}_{\rm
EH}=P\mv{v}_1\mv{v}_1^H$.
\end{proposition}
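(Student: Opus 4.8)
The plan is to convert (P1) into an eigenvalue problem for the Hermitian positive semi-definite matrix $\mv{G}^H\mv{G}$ and then invoke a standard trace inequality to obtain both the optimal value and an optimizer. First I would use the cyclic property of the trace to rewrite the objective as $Q=\mathtt{tr}(\mv{G}\mv{S}\mv{G}^H)=\mathtt{tr}(\mv{G}^H\mv{G}\,\mv{S})$, and record from the reduced SVD $\mv{G}=\mv{U}_G\mv{\Gamma}_G^{1/2}\mv{V}_G^H$ (together with $\mv{U}_G^H\mv{U}_G=\mv{I}$) that $\mv{G}^H\mv{G}=\mv{V}_G\mv{\Gamma}_G\mv{V}_G^H$, so that the largest eigenvalue of $\mv{G}^H\mv{G}$ is $g_1$ with associated unit-norm eigenvector $\mv{v}_1$ (the first column of $\mv{V}_G$).

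Second, for any feasible $\mv{S}\succeq 0$ with $\mathtt{tr}(\mv{S})\le P$, I would bound $\mathtt{tr}(\mv{G}^H\mv{G}\,\mv{S})\le g_1\,\mathtt{tr}(\mv{S})\le g_1 P$. The first inequality is the elementary fact that $\mathtt{tr}(\mv{A}\mv{B})\le \lambda_{\max}(\mv{A})\,\mathtt{tr}(\mv{B})$ for Hermitian $\mv{A}\succeq 0$ and $\mv{B}\succeq 0$; this is proved by writing the eigendecomposition $\mv{B}=\sum_i \lambda_i\,\mv{b}_i\mv{b}_i^H$ with $\lambda_i\ge 0$, so $\mathtt{tr}(\mv{A}\mv{B})=\sum_i \lambda_i\,\mv{b}_i^H\mv{A}\mv{b}_i\le \lambda_{\max}(\mv{A})\sum_i\lambda_i=\lambda_{\max}(\mv{A})\,\mathtt{tr}(\mv{B})$, using $\mv{b}_i^H\mv{A}\mv{b}_i\le\lambda_{\max}(\mv{A})$ for unit vectors $\mv{b}_i$. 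The second inequality is just the power constraint together with $g_1\ge 0$.

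Third, I would verify that $\mv{S}_{\rm EH}=P\mv{v}_1\mv{v}_1^H$ attains this upper bound: it is positive semi-definite, it satisfies $\mathtt{tr}(\mv{S}_{\rm EH})=P\|\mv{v}_1\|^2=P$, and $\mathtt{tr}(\mv{G}^H\mv{G}\,\mv{S}_{\rm EH})=P\,\mv{v}_1^H\mv{G}^H\mv{G}\,\mv{v}_1=Pg_1$. Hence $\mv{S}_{\rm EH}$ is feasible and achieves $Q=Pg_1$, so it solves (P1).

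I do not expect a genuine obstacle here — the argument is short and self-contained. The only subtlety worth a remark is that the maximizer need not be unique in general: if $g_1=g_2=\cdots$ then any covariance supported on the top eigenspace of $\mv{G}^H\mv{G}$ with trace $P$ is also optimal, so the statement should be read as giving \emph{an} optimal solution, or one adds the mild genericity condition $g_1>g_2$ to secure uniqueness. An alternative route via the KKT conditions of (P1) — a convex program for which Slater's condition trivially holds — would also work, but the trace-inequality argument is cleaner and yields the explicit beamforming solution directly.
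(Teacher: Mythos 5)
Your proof is correct and follows essentially the same route as the paper's: both diagonalize $\mv{S}$, write the objective as $\sum_i \lambda_i\,\mv{b}_i^H\mv{G}^H\mv{G}\,\mv{b}_i$, and bound each term by the largest eigenvalue $g_1$ of $\mv{G}^H\mv{G}$ before verifying that $P\mv{v}_1\mv{v}_1^H$ attains the bound $g_1P$. Your packaging of this as the standard inequality $\mathtt{tr}(\mv{A}\mv{B})\le\lambda_{\max}(\mv{A})\,\mathtt{tr}(\mv{B})$, and your remark on non-uniqueness when $g_1=g_2$, are fine but do not change the substance.
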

\begin{proof}
See Appendix \ref{appendix:proof opt S EH}.
\end{proof}

Given $\mv{S}=\mv{S}_{\rm EH}$, it follows that the maximum
harvested power at the EH receiver is given by $Q_{\max}=g_1P$. It
is worth noting that since $\mv{S}_{\rm EH}$ is a rank-one matrix,
the maximum harvested power is achieved by {\it beamforming} at the
transmitter, which aligns with the strongest eigenmode of the matrix
$\mv{G}^H\mv{G}$, i.e., the transmitted signal can be written as
$\mv{x}(n)=\sqrt{P}\mv{v}_1s(n)$, where $s(n)$ is an arbitrary
random signal over $n$ with zero mean and unit variance, and
$\mv{v}_1$ is the transmit beamforming vector. For convenience, we
name the above transmit beamforming scheme to maximize the
efficiency of WPT as ``energy beamforming''.

Next, consider the MIMO link from the transmitter to the ID receiver
without the presence of any EH receiver. Assuming the optimal
Gaussian codebook at the transmitter, i.e.,
$\mv{x}(n)\sim\mathcal{CN}(\mv{0},\mv{S})$, the transmit covariance
$\mv{S}$ to maximize the transmission rate over this MIMO channel
can be obtained by solving the following problem \cite{Cover}:
\begin{align}
\mbox{(P2)}~~\mathop{\mathtt{max}}_{\smv{S}} & ~~~ R:=
\log|\mv{I}+\mv{H}\mv{S}\mv{H}^H |
\nonumber \\
\mathtt{s.t.} & ~~~ \mathtt{tr}(\mv{S})\leq P, \mv{S}\succeq 0.
\nonumber
\end{align}
The optimal solution to the above problem is known to have the
following form \cite{Cover}: $\mv{S}_{\rm
ID}=\mv{V}_H\mv{\Lambda}\mv{V}_H^H$, where
$\mv{V}_H\in\mathbb{C}^{M\times T_2}$ is obtained from the (reduced)
SVD of $\mv{H}$ expressed by
$\mv{H}=\mv{U}_H\mv{\Gamma}_H^{1/2}\mv{V}_H^H$, with
$T_2=\min(M,N_{\rm ID})$, $\mv{U}_H\in\mathbb{C}^{N_{\rm ID}\times
T_2}$, $\mv{\Gamma}_H=\mathtt{ diag}(h_1,\ldots,h_{T_2})$, $h_1\geq
h_2\geq \ldots \geq h_{T_2}\geq 0$, and
$\mv{\Lambda}=\mathtt{diag}(p_1,\ldots,p_{T_2})$ with the diagonal
elements obtained from the standard ``water-filling (WF)'' power
allocation solution \cite{Cover}:
\begin{equation}\label{eq:WF}
p_i=\left(\nu-\frac{1}{h_i}\right)^+, \ \ i=1,\ldots,T_2
\end{equation}
with $\nu$ being the so-called (constant) water-level that makes
$\sum_{i=1}^{T_2}p_i=P$. The corresponding maximum transmission rate
is then given by $R_{\max}=\sum_{i=1}^{T_2}\log(1+h_ip_i)$. The
maximum rate is achieved in general by {\it spatial multiplexing}
\cite{Telater} over up to $T_2$ spatially decoupled AWGN channels,
together with the Gaussian codebook, i.e., the transmitted signal
can be expressed as $\mv{x}(n)=\mv{V}_H\mv{\Lambda}^{1/2}\mv{s}(n)$,
where $\mv{s}(n)$ is a Gaussian random vector
$\sim\mathcal{CN}(\mv{0},\mv{I})$, $\mv{V}_H$ and
$\mv{\Lambda}^{1/2}$ denote the precoding matrix and the (diagonal)
power allocation matrix, respectively.

\begin{remark}\label{remark:power rate relationship}
It is worth noting that in Problem (P1), it is assumed that the
transmitter sends to the EH receiver continuously. Now suppose that
the transmitter only transmits a fraction of the total time denoted
by $\alpha$ with $0<\alpha\leq 1$. Furthermore, assume that the
transmit power level can be adjusted flexibly provided that the
consumed average power is bounded by $P$, i.e., $\alpha \cdot
\mathtt{tr}(\mv{S})+(1-\alpha)\cdot 0 \leq P$ or
$\mathtt{tr}(\mv{S})\leq P/\alpha$. In this case, it can be easily
shown that the transmit covariance
$\mv{S}=(P/\alpha)\mv{v}_1\mv{v}_1^H$ also achieves the maximum
harvested power $Q_{\max}=g_1P$ for any $0<\alpha\leq 1$, which
suggests that the maximum power delivered is independent of
transmission time. However, unlike the case of maximum power
transfer, the maximum information rate reliably transmitted to the
ID receiver requires that the transmitter send signals continuously,
i.e., $\alpha=1$, as assumed in Problem (P2). This can be easily
verified by observing that for any $0<\alpha\leq 1$ and
$\mv{S}\succeq 0$, $\alpha \log|\mv{I}+\mv{H}(\mv{S}/\alpha)\mv{H}^H
|\leq \log|\mv{I}+\mv{H}\mv{S}\mv{H}^H |$ where the equality holds
only when $\alpha=1$, since $R$ is a nonlinear concave function of
$\mv{S}$. Thus, to maximize both power and rate transfer at the same
time, the transmitter should broadcast to the EH and ID receivers
all the time. Furthermore, note that the assumed Gaussian
distribution for transmitted signals is necessary for achieving the
maximum rate transfer, but not necessary for the maximum power
transfer. In fact, for any arbitrary complex number $c$ that
satisfies $|c|=1$, even a deterministic transmitted signal
$\mv{x}(n)=\sqrt{P}\mv{v}_1c, \forall n$, achieves the maximum
transferred power $Q_{\max}$ in Problem (P1). However, to maximize
simultaneous power and information transfer with the same
transmitted signal, the Gaussian input distribution is sufficient as
well as necessary.
\end{remark}

\begin{figure}
\centering{
 \epsfxsize=5in
    \leavevmode{\epsfbox{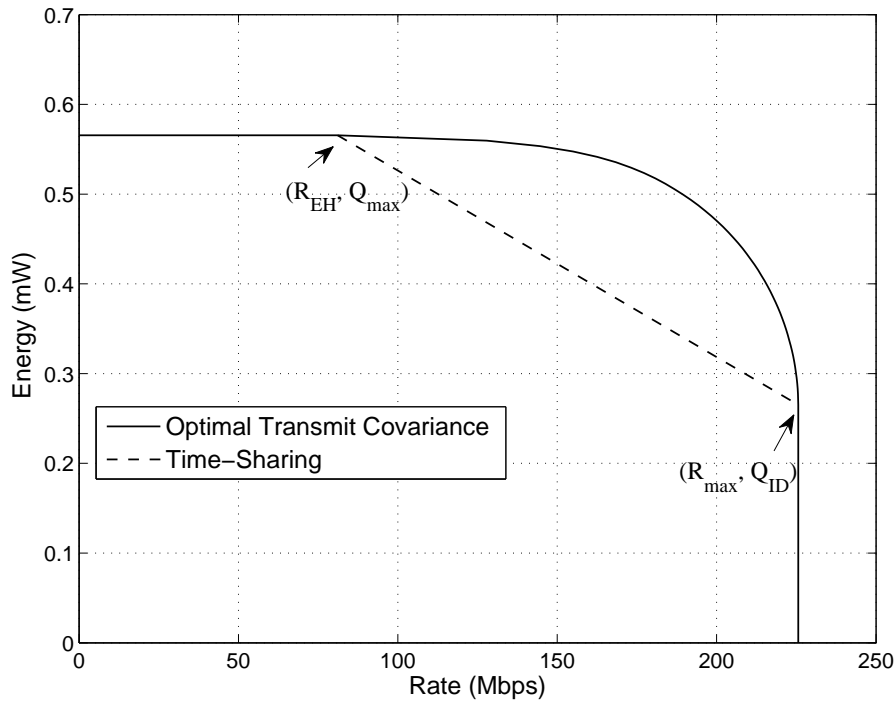}}}
\caption{Rate-energy tradeoff for a MIMO broadcast system with
separated EH and ID receivers, and $M=N_{\rm EH}=N_{\rm
ID}=4$.}\label{fig:RE region MIMO}
\end{figure}

Now, consider the case where both the EH and ID receivers are
present. From the above results, it is seen that the optimal
transmission strategies for maximal power transfer and information
transfer are in general different, which are energy beamforming and
information spatial multiplexing, respectively. It thus motivates
our investigation of the following question: What is the optimal
broadcasting strategy for simultaneous wireless power and
information transfer? To answer this question, we propose to use the
{\it Rate-Energy} (R-E) region (defined below) to characterize all
the achievable rate (in bits/sec/Hz or bps for information transfer)
and energy (in joule/sec or watt for power transfer) pairs under a
given transmit power constraint. Without loss of generality,
assuming that the transmitter sends Gaussian signals continuously
(cf. Remark \ref{remark:power rate relationship}), the R-E region is
defined as
\begin{align}\label{eq:RE region}
\mathcal{C}_{\rm R-E}(P)\triangleq\bigg\{(R,Q): R\leq
\log|\mv{I}+\mv{H}\mv{S}\mv{H}^H |, Q\leq
\mathtt{tr}(\mv{G}\mv{S}\mv{G}^H), \mathtt{tr}(\mv{S})\leq P,
\mv{S}\succeq 0 \bigg\}.
\end{align}

In Fig. \ref{fig:RE region MIMO}, an example of the above defined
R-E region (see Section \ref{sec:separated receivers} for the
algorithm to compute the boundary of this region) is shown for a
practical MIMO broadcast system with separated EH and ID receivers
(i.e., $\mv{G}\neq\mv{H}$). It is assumed that $M=N_{\rm EH}=N_{\rm
ID}=4$. The transmitter power is assumed to be $P=1$ watt(W) or
$30$dBm. The distances from the transmitter to the EH and ID
receivers are assumed to be $1$ meter and $10$ meters, respectively;
thus, we can exploit the near-far based energy and information
transmission scheduling, which may correspond to, e.g., a dedicated
energy transfer system (to ``near'' users) with opportunistic
information transmission (to ``far'' users), or vice versa. Assuming
a carrier frequency of $f_c=900$MHz and the power pathloss exponent
to be 4, the distance-dependent signal attenuation from the AP to
EH/ID receiver can be estimated as 40dB and 80dB, respectively.
Accordingly, the average signal power at the EH/ID receiver is thus
30dBm$-40$dB$=-10$dBm and 30dBm$-80$dB$=-50$dBm, respectively. It is
further assumed that in addition to signal pathloss, Rayleigh fading
is present, as such each element of channel matrices $\mv{G}$ and
$\mv{H}$ is independently drawn from the CSCG distribution with zero
mean and variance $-10$dBm (for EH receiver) and $-50$dBm for (for
ID receiver), respectively (to be consistent with the signal
pathloss previously assumed). Furthermore, the bandwidth of the
transmitted signal is assumed to be 10MHz, while the receiver noise
is assumed to be white Gaussian with power spectral density
$-140$dBm/Hz (which is dominated by the receiver processing noise
rather than the background thermal noise) or average power $-70$dBm
over the bandwidth of 10MHz. As a result, considering all of
transmit power, signal attenuation, fading and receiver noise, the
per-antenna average SNR at the ID receiver is equal to
$30-80-(-70)=20$dB, which corresponds to $P=100$ in the equivalent
signal model for the ID receiver given in (\ref{eq:signal model})
with unit-norm noise. In addition, we assume that for the EH
receiver, the energy conversion efficiency is $\zeta=$50\%.
Considering this together with transmit power and signal
attenuation, the average per-antenna signal power at the EH receiver
is thus $0.5\times$(30dBm$-40$dB) $=50$$\mu$W.

From Fig. \ref{fig:RE region MIMO}, it is observed that with energy
beamforming, the maximum harvested energy rate for the EH receiver
is around $Q_{\max}=0.57$mW, while with spatial multiplexing, the
maximum information rate for the ID receiver is around
$R_{\max}=225$Mbps. It is easy to identify two boundary points of
this R-E region denoted by $(R_{\rm EH}, Q_{\max})$ and $(R_{\max},
Q_{\rm ID})$, respectively. For the former boundary point, the
transmit covariance is $\mv{S}_{\rm EH}$, which corresponds to
transmit beamforming and achieves the maximum transferred power
$Q_{\max}$ to the EH receiver, while the resulting information rate
for the ID receiver is given by $R_{\rm
EH}=\log(1+\|\mv{H}\mv{v}_1\|^2P)$. On the other hand, for the
latter boundary point, the transmit covariance is $\mv{S}_{\rm ID}$,
which corresponds to transmit spatial multiplexing and achieves the
maximum information rate transferred to the ID receiver $R_{\max}$,
while the resulting power transferred  to the EH receiver is given
by $Q_{\rm ID}=\mathtt{tr}(\mv{G}\mv{S}_{\rm ID}\mv{G}^H)$.

Since the optimal tradeoff between the maximum energy and
information transfer rates is characterized by the boundary of the
R-E region, it is important to characterize all the boundary
rate-power pairs of $\mathcal{C}_{\rm R-E}(P)$ for any $P>0$. From
Fig. \ref{fig:RE region MIMO}, it is easy to observe that if $R\leq
R_{\rm EH}$, the maximum harvested power $Q_{\max}$ is achievable
with the same transmit covariance that achieves the rate-power pair
$(R_{\rm EH}, Q_{\max})$; similarly, the maximum information rate
$R_{\max}$ is achievable provided that $Q\leq Q_{\rm ID}$. Thus, the
remaining boundary of $\mathcal{C}_{\rm R-E}(P)$ yet to be
characterized is over the intervals: $R_{\rm EH}<R< R_{\max}$,
$Q_{\rm ID}<Q<Q_{\max}$. We thus consider the following optimization
problem:
\begin{align}
\mbox{(P3)}~~\mathop {\mathtt{max}}_{\smv{S}} & ~~~
\log\left|\mv{I}+\mv{H}\mv{S}\mv{H}^H\right|
\nonumber \\
\mathtt{s.t.} & ~~~ \mathtt{tr}\left(\mv{G}\mv{S}\mv{G}^H\right)\geq
\bar{Q}, ~\mathtt{tr}(\mv{S})\leq P, ~\mv{S}\succeq 0. \nonumber
\end{align}
Note that if $\bar{Q}$ takes values from $Q_{\rm
ID}<\bar{Q}<Q_{\max}$, the corresponding optimal rate solutions of
the above problems are the boundary rate points of the R-E region
over $R_{\rm EH}<R< R_{\max}$. Notice that the transmit covariance
solutions to the above problems in general yield larger rate-power
pairs than those by simply ``time-sharing'' the optimal transmit
covariance matrices $\mv{S}_{\rm EH}$ and $\mv{S}_{\rm ID}$ for EH
and ID receivers separately (see the dashed line in Fig. \ref{fig:RE
region MIMO}).\footnote{By time-sharing, we mean that the AP
transmits simultaneously to both EH and ID receivers with the
energy-maximizing transmit covariance $\mv{S}_{\rm EH}$ (i.e. energy
beamforming) for $\beta$ portion of each block time, and the
information-rate-maximizing transmit covariance $\mv{S}_{\rm ID}$
(i.e. spatial multiplexing) for the remaining $1-\beta$ portion of
each block time, with $0\leq\beta\leq 1$.}

Problem (P3) is a convex optimization problem, since its objective
function is concave over $\mv{S}$ and its constraints specify a
convex set of $\mv{S}$. Note that (P3) resembles a similar problem
formulated in \cite{Gastpar}, \cite{Zhang} (see also \cite{ZhangCR}
and references therein) under the cognitive radio (CR) setup, where
the rate of a secondary MIMO link is maximized subject to a set of
so-called {\it interference power constraints} to protect the
co-channel primary receivers. However, there is a key difference
between (P3) and the problem in \cite{Zhang}: the harvested power
constraint in (P3) has the reversed inequality of that of the
interference power constraint in \cite{Zhang}, since in our case it
is desirable for the EH receiver to harvest more power from the
transmitter, as opposed to that in \cite{Zhang} the interference
power at the primary receiver should be minimized. As such, it is
not immediately clear whether the solution in \cite{Zhang} can be
directly applied for solving (P3) with the reversed power
inequality. In the following, we will examine the solutions to
Problem (P3) for the two cases with arbitrary $\mv{G}$ and $\mv{H}$
(the case of separated receivers) and $\mv{G}=\mv{H}$ (the case of
co-located receivers), respectively.

\section{Separated Receivers}\label{sec:separated
receivers}

Consider the case where the EH receiver and ID receiver are
spatially separated and thus in general have different channels from
the transmitter. In this section, we first solve Problem (P3) with
arbitrary $\mv{G}$ and $\mv{H}$ and derive a semi-closed-form
expression for the optimal transmit covariance. Then, we examine the
optimal solution for the special case of MISO channels from the
transmitter to ID and/or EH receivers.

Since Problem (P3) is convex and satisfies the Slater's condition
\cite{Boyd}, it has a zero duality gap and thus can be solved using
the Lagrange duality method.\footnote{It is worth noting that
Problem (P3) is convex and thus can be solved efficiently by the
interior point method \cite{Boyd}; in this paper, we apply the
Lagrange duality method for this problem mainly to reveal the
optimal precoder structure.} Thus, we introduce two non-negative
dual variables, $\lambda$ and $\mu$, associated with the harvested
power constraint and transmit power constraint in (P3),
respectively. The optimal solution to Problem (P3) is then given by
the following theorem in terms of $\lambda^*$ and $\mu^*$, which are
the optimal dual solutions of Problem (P3) (see Appendix
\ref{appendix:proof optimal S} for details). Note that for Problem
(P3), given any pair of $\bar{Q}$ ($Q_{\rm ID}<\bar{Q}<Q_{\max}$)
and $P>0$, there exists one unique pair of $\lambda^*>0$ and
$\mu^*>0$.
\begin{theorem}\label{theorem:optimal S}
The optimal solution to Problem (P3) has the following form:
\begin{align}\label{eq:optimal S}
\mv{S}^*=\mv{A}^{-1/2}\tilde{\mv{V}}\tilde{\mv{\Lambda}}\tilde{\mv{V}}^H\mv{A}^{-1/2}
\end{align}
where $\mv{A}=\mu^*\mv{I}-\lambda^*\mv{G}^H\mv{G}$,
$\tilde{\mv{V}}\in\mathbb{C}^{M\times T_2}$ is obtained from the
(reduced) SVD of the matrix $\mv{H}\mv{A}^{-1/2}$ given by
$\mv{H}\mv{A}^{-1/2}=\tilde{\mv{U}}\tilde{\mv{\Gamma}}^{1/2}\tilde{\mv{V}}^H$,
with
$\tilde{\mv{\Gamma}}=\mathtt{diag}(\tilde{h}_1,\ldots,\tilde{h}_{T_2})$,
$\tilde{h}_1\geq \tilde{h}_2\geq \ldots \geq \tilde{h}_{T_2}\geq 0$,
and
$\tilde{\mv{\Lambda}}=\mathtt{diag}(\tilde{p}_1,\ldots,\tilde{p}_{T_2})$,
with $\tilde{p}_i=(1-1/\tilde{h}_i)^+, i=1,\ldots,T_2$.
\end{theorem}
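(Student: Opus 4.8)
The plan is to solve (P3) by Lagrangian duality, which is exact here since (P3) is convex and satisfies Slater's condition, so strong duality holds and the primal (hence dual) optimum is finite and attained. First I would form the Lagrangian with non-negative multipliers $\lambda$ (for the harvested-power constraint $\bar{Q}-\mathtt{tr}(\mv{G}\mv{S}\mv{G}^H)\leq 0$) and $\mu$ (for $\mathtt{tr}(\mv{S})-P\leq 0$), and collect the terms linear in $\mv{S}$:
\begin{align}
L(\mv{S},\lambda,\mu)=\log\left|\mv{I}+\mv{H}\mv{S}\mv{H}^H\right|-\mathtt{tr}\!\left((\mu\mv{I}-\lambda\mv{G}^H\mv{G})\mv{S}\right)+\mu P-\lambda\bar{Q},
\end{align}
so that the matrix $\mv{A}=\mu\mv{I}-\lambda\mv{G}^H\mv{G}$ of Theorem \ref{theorem:optimal S} appears naturally. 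The dual function is $g(\lambda,\mu)=\max_{\mv{S}\succeq 0}L(\mv{S},\lambda,\mu)$, and the objective is to evaluate this inner maximization at the optimal dual pair $(\lambda^*,\mu^*)$.

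The second step is to establish that $\mv{A}^*:=\mu^*\mv{I}-\lambda^*\mv{G}^H\mv{G}\succ 0$, equivalently $\mu^*>\lambda^* g_1$. If $\mv{A}$ had an eigenvalue $\leq 0$ with unit eigenvector $\mv{u}$, then choosing $\mv{S}=t\mv{u}\mv{u}^H$ and letting $t\to\infty$ keeps $-\mathtt{tr}(\mv{A}\mv{S})\geq 0$ while $\log\left|\mv{I}+\mv{H}\mv{S}\mv{H}^H\right|$ is nondecreasing in $t$, forcing $L\to\infty$ and $g(\lambda,\mu)=+\infty$; since the optimal dual solution minimizes $g$ and $g(\lambda^*,\mu^*)$ is finite by strong duality, we must have $\mv{A}^*\succ 0$. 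Along the way I would also record $\lambda^*>0$ and $\mu^*>0$: the transmit-power constraint is tight because $\log\left|\mv{I}+\mv{H}\mv{S}\mv{H}^H\right|$ is strictly increasing along the PSD cone, and the energy constraint is tight because $\bar{Q}>Q_{\rm ID}$ rules out the unconstrained water-filling optimum $\mv{S}_{\rm ID}$.

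With $\mv{A}^*\succ 0$ guaranteed, the third step is the change of variables $\mv{S}=(\mv{A}^*)^{-1/2}\tilde{\mv{S}}(\mv{A}^*)^{-1/2}$, a bijection of the PSD cone onto itself, under which $\mathtt{tr}(\mv{A}^*\mv{S})=\mathtt{tr}(\tilde{\mv{S}})$ and $\mv{H}\mv{S}\mv{H}^H=\tilde{\mv{H}}\tilde{\mv{S}}\tilde{\mv{H}}^H$ with $\tilde{\mv{H}}=\mv{H}(\mv{A}^*)^{-1/2}$. The inner problem then reduces to
\begin{align}
\max_{\tilde{\mv{S}}\succeq 0}~\log\left|\mv{I}+\tilde{\mv{H}}\tilde{\mv{S}}\tilde{\mv{H}}^H\right|-\mathtt{tr}(\tilde{\mv{S}}),
\end{align}
which is the classical single-user MIMO problem with unit water-level: using the reduced SVD $\tilde{\mv{H}}=\tilde{\mv{U}}\tilde{\mv{\Gamma}}^{1/2}\tilde{\mv{V}}^H$ together with the standard Hadamard-inequality and per-eigenmode concavity argument (as in the derivation of (\ref{eq:WF}), but with water-level fixed to $1$), the maximizer is $\tilde{\mv{S}}=\tilde{\mv{V}}\tilde{\mv{\Lambda}}\tilde{\mv{V}}^H$ with $\tilde{p}_i=(1-1/\tilde h_i)^+$. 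Mapping back through $(\mv{A}^*)^{-1/2}$ yields precisely (\ref{eq:optimal S}).

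Finally, I would close the argument via complementary slackness: since $\lambda^*,\mu^*>0$, both constraints of (P3) hold with equality at $\mv{S}^*$, i.e. $\mathtt{tr}(\mv{G}\mv{S}^*\mv{G}^H)=\bar{Q}$ and $\mathtt{tr}(\mv{S}^*)=P$; these two equations, together with the monotone dependence of the water-filling solution on $(\lambda,\mu)$ through $\mv{A}$ and $\tilde{\mv{H}}$, pin down the unique pair $(\lambda^*,\mu^*)$, and the zero duality gap then certifies global optimality of $\mv{S}^*$ for (P3). I expect the main obstacle to be the second step — rigorously proving strict positive definiteness of $\mv{A}^*$ and strict positivity of both dual variables — since the form (\ref{eq:optimal S}) is only meaningful when $(\mv{A}^*)^{-1/2}$ exists; once that is secured, the remaining reduction to standard MIMO water-filling and the slackness bookkeeping are routine.
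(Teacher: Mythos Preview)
Your proposal is correct and follows essentially the same Lagrangian-duality route as the paper: form $L(\mv{S},\lambda,\mu)$, show $\mv{A}=\mu\mv{I}-\lambda\mv{G}^H\mv{G}\succ 0$ at the optimal duals by an unboundedness argument, and then reduce the inner maximization to standard MIMO water-filling with unit water-level via the substitution $\mv{S}=\mv{A}^{-1/2}\tilde{\mv{S}}\mv{A}^{-1/2}$. The only cosmetic differences are that the paper proves $\mu>\lambda g_1$ using $\mv{S}=\beta\mv{v}_1\mv{v}_1^H$ (with $\mv{v}_1$ the top right singular vector of $\mv{G}$) rather than a generic nonpositive eigenvector of $\mv{A}$, outsources the inner water-filling step to a citation rather than carrying out your change of variables explicitly, and determines $(\lambda^*,\mu^*)$ algorithmically via the ellipsoid method rather than through the complementary-slackness equations you describe.

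One small caution on your second step: the clause ``$\log|\mv{I}+\mv{H}\mv{S}\mv{H}^H|$ is nondecreasing in $t$, forcing $L\to\infty$'' does not quite close the case where the relevant eigenvalue of $\mv{A}$ is exactly zero and $\mv{H}\mv{u}=\mv{0}$, since then $L$ stays bounded; the paper sidesteps this by working with $\mv{v}_1$ and explicitly assuming $\|\mv{H}\mv{v}_1\|^2>0$ (justified by $\mv{H}$ and $\mv{G}$ being either independent or identical). You should add the analogous genericity remark, or split the argument into the strict-negative case (where $-\mathtt{tr}(\mv{A}\mv{S})\to+\infty$ already suffices) and the zero case (where you need $\mv{H}\mv{u}\neq\mv{0}$).
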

\begin{proof}
See Appendix \ref{appendix:proof optimal S}.
\end{proof}
Note that this theorem requires that
$\mv{A}=\mu^*\mv{I}-\lambda^*\mv{G}^H\mv{G}\succ 0$, implying that
$\mu^*>\lambda^*g_1$ (recall that $g_1$ is the largest eigenvalue of
matrix $\mv{G}^H\mv{G}$), which is not present for a similar result
in \cite{ZhangCR} under the CR setup with the reversed interference
power constraint. One algorithm that can be used to solve (P3) is
provided in Table \ref{table} of Appendix \ref{appendix:proof
optimal S}. From Theorem \ref{theorem:optimal S}, the maximum
transmission rate for Problem (P3) can be shown to be
$R^*=\log\left|\mv{I}+\mv{H}\mv{S}^*\mv{H}^H\right|
=\sum_{i=1}^{T_2}\log(1+\tilde{h}_i\tilde{p}_i)$, for which the
proof is omitted here for brevity.

Next, we examine the optimal solution to Problem (P3) for the
special case where the ID receiver has one single antenna, i.e.,
$N_{\rm ID}=1$, and thus the MIMO channel $\mv{H}$ reduces to a row
vector $\mv{h}^H$ with $\mv{h}\in\mathbb{C}^{M\times 1}$. Suppose
that the EH receiver is still equipped with $N_{\rm EH}\geq 1$
antennas, and thus the MIMO channel $\mv{G}$ remains unchanged. From
Theorem \ref{theorem:optimal S}, we obtain the following corollary.
\begin{corollary}\label{corollary:optimal S MISO}
In the case of MISO channel from the transmitter to ID receiver,
i.e., $\mv{H}\equiv\mv{h}^H$, the optimal solution to Problem (P3)
reduces to the following form:
\begin{align}\label{eq:optimal S MISO}
\mv{S}^*=\mv{A}^{-1}\mv{h}\left(\frac{1}{\|\mv{A}^{-1/2}\mv{h}\|^2}-\frac{1}{\|\mv{A}^{-1/2}\mv{h}\|^4}\right)^+\mv{h}^H\mv{A}^{-1}
\end{align}
where $\mv{A}=\mu^*\mv{I}-\lambda^*\mv{G}^H\mv{G}$, with $\lambda^*$
and $\mu^*$ denoting the optimal dual solutions of Problem (P3).
Correspondingly, the optimal value of (P3) is
$R^*=\left(2\log\left(\|\mv{A}^{-1/2}\mv{h}\|\right)\right)^+$.
\end{corollary}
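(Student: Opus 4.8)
The plan is to obtain Corollary~\ref{corollary:optimal S MISO} simply by specializing Theorem~\ref{theorem:optimal S} to $N_{\rm ID}=1$, where $T_2=\min(M,1)=1$ so that $\tilde{\mv{V}}$, $\tilde{\mv{\Gamma}}$ and $\tilde{\mv{\Lambda}}$ each collapse to a single entry. First I would note that $\mv{A}=\mu^*\mv{I}-\lambda^*\mv{G}^H\mv{G}$ is Hermitian and, by the remark immediately following Theorem~\ref{theorem:optimal S}, positive definite; hence $\mv{A}^{-1/2}$ is Hermitian, so $(\mv{h}^H\mv{A}^{-1/2})^H=\mv{A}^{-1/2}\mv{h}$ and $\|\mv{h}^H\mv{A}^{-1/2}\|=\|\mv{A}^{-1/2}\mv{h}\|$. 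The (reduced) SVD of the row vector $\mv{H}\mv{A}^{-1/2}=\mv{h}^H\mv{A}^{-1/2}\in\mathbb{C}^{1\times M}$ then reads $\mv{h}^H\mv{A}^{-1/2}=\tilde{h}_1^{1/2}\tilde{\mv{v}}_1^H$, with left factor the scalar $1$, single singular value $\tilde{h}_1=\|\mv{A}^{-1/2}\mv{h}\|^2$, and right singular vector $\tilde{\mv{v}}_1=\mv{A}^{-1/2}\mv{h}/\|\mv{A}^{-1/2}\mv{h}\|$.

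Next I would substitute these into (\ref{eq:optimal S}). Since $\tilde{\mv{\Lambda}}$ is now the scalar $\tilde{p}_1=(1-1/\tilde{h}_1)^+$, and $\mv{A}^{-1/2}\tilde{\mv{v}}_1=\mv{A}^{-1}\mv{h}/\|\mv{A}^{-1/2}\mv{h}\|$, the expression $\mv{S}^*=\mv{A}^{-1/2}\tilde{\mv{v}}_1\,\tilde{p}_1\,\tilde{\mv{v}}_1^H\mv{A}^{-1/2}$ becomes
\[
\mv{S}^*=\frac{\tilde{p}_1}{\|\mv{A}^{-1/2}\mv{h}\|^2}\,\mv{A}^{-1}\mv{h}\mv{h}^H\mv{A}^{-1}.
\]
Because the prefactor $1/\|\mv{A}^{-1/2}\mv{h}\|^2$ is strictly positive, it can be pushed inside the $(\cdot)^+$, yielding $\tilde{p}_1/\|\mv{A}^{-1/2}\mv{h}\|^2=\bigl(1/\|\mv{A}^{-1/2}\mv{h}\|^2-1/\|\mv{A}^{-1/2}\mv{h}\|^4\bigr)^+$, which is exactly (\ref{eq:optimal S MISO}).

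For the optimal value I would invoke the rate formula $R^*=\sum_{i=1}^{T_2}\log(1+\tilde{h}_i\tilde{p}_i)$ stated after Theorem~\ref{theorem:optimal S}, here the single term $\log(1+\tilde{h}_1\tilde{p}_1)$. If $\tilde{h}_1=\|\mv{A}^{-1/2}\mv{h}\|^2>1$ then $\tilde{p}_1=1-1/\tilde{h}_1$ and $1+\tilde{h}_1\tilde{p}_1=\tilde{h}_1$, giving $R^*=\log\tilde{h}_1=2\log\|\mv{A}^{-1/2}\mv{h}\|$; if $\tilde{h}_1\le1$ then $\tilde{p}_1=0$ and $R^*=0$. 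The two cases combine into $R^*=\bigl(2\log\|\mv{A}^{-1/2}\mv{h}\|\bigr)^+$, as claimed.

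This argument is essentially bookkeeping, so I do not expect a genuine obstacle. The only points requiring a little care are the passage through the rank-one SVD of the row vector $\mv{h}^H\mv{A}^{-1/2}$ together with the Hermitian-ness of $\mv{A}^{-1/2}$ — so that the right singular vector is a normalized $\mv{A}^{-1/2}\mv{h}$ rather than carrying a stray conjugate transpose — and the legitimacy of moving the strictly positive scalar $1/\|\mv{A}^{-1/2}\mv{h}\|^2$ through the $(\cdot)^+$ operator.
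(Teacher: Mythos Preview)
Your proposal is correct and follows essentially the same approach as the paper's own proof: specialize Theorem~\ref{theorem:optimal S} to $T_2=1$, read off the rank-one SVD of $\mv{h}^H\mv{A}^{-1/2}$ with $\tilde{h}_1=\|\mv{A}^{-1/2}\mv{h}\|^2$ and $\tilde{\mv{v}}_1=\mv{A}^{-1/2}\mv{h}/\|\mv{A}^{-1/2}\mv{h}\|$, substitute into~(\ref{eq:optimal S}), and simplify. Your explicit remark that $\mv{A}^{-1/2}$ is Hermitian (so the right singular vector is $\mv{A}^{-1/2}\mv{h}$ normalized, not its conjugate) and your justification for pulling the positive scalar through $(\cdot)^+$ are minor clarifications the paper leaves implicit, and your case split for $R^*$ amounts to the same thing as the paper's compact $(\log\tilde{h}_1)^+$ step.
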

\begin{proof}
See Appendix \ref{appendix:proof MISO optimal S}.
\end{proof}

From (\ref{eq:optimal S MISO}), it is observed that the optimal
transmit covariance is a {\it rank-one} matrix, from which it
follows that {\it beamforming} is the optimal transmission strategy
in this case, where the transmit beamforming vector should be
aligned with the vector $\mv{A}^{-1}\mv{h}$. Moreover, consider the
case where both channels from the transmitter to ID/EH receivers are
MISO, i.e., $\mv{H}\equiv\mv{h}^H$, and $\mv{G}\equiv\mv{g}^H$ with
$\mv{g}\in\mathbb{C}^{M\times 1}$. From Corollary
\ref{corollary:optimal S MISO}, it follows immediately that the
optimal covariance solution to Problem (P3) is still beamforming. In
the following theorem, we show a closed-form solution of the optimal
beamforming vector at the transmitter for this special case, which
differs from the semi-closed-form solution (\ref{eq:optimal S MISO})
that was expressed in terms of dual variables.
\begin{theorem}\label{theorem:optimal S MISO new}
In the case of MISO channels from transmitter to both ID and EH
receivers, i.e., $\mv{H}\equiv\mv{h}^H$, and $\mv{G}\equiv\mv{g}^H$,
the optimal solution to Problem (P3) can be expressed as
$\mv{S}^*=P\mv{v}\mv{v}^H$, where the beamforming vector $\mv{v}$
has a unit-norm and is given by
\begin{eqnarray}\label{eq:optimal S MISO new}
\mv{v}=\left\{\begin{array}{ll} \hat{\mv{h}} &
0\leq\bar{Q}\leq |\mv{g}^H\hat{\mv{h}}|^2P \\
\sqrt{\frac{\bar{Q}}{P\|\smv{g}\|^2}}e^{j\angle\alpha_{gh}}\hat{\mv{g}}
+\sqrt{1-\frac{\bar{Q}}{P\|\smv{g}\|^2}}\hat{\mv{h}}_{g^\bot} &
|\mv{g}^H\hat{\mv{h}}|^2P<\bar{Q}\leq
P\|\mv{g}\|^2\end{array}\right.
\end{eqnarray}
where $\hat{\mv{h}}=\mv{h}/\|\mv{h}\|$,
$\hat{\mv{g}}=\mv{g}/\|\mv{g}\|$,
$\hat{\mv{h}}_{g^\bot}=\mv{h}_{g^\bot}/\|\mv{h}_{g^\bot}\|$ with
$\mv{h}_{g^\bot}=\mv{h}-(\hat{\mv{g}}^H\mv{h})\hat{\mv{g}}$, and
$\alpha_{gh}=\hat{\mv{g}}^H\mv{h}$ with $\angle
\alpha_{gh}\in[0,2\pi)$ denoting the phase of complex number
$\alpha_{gh}$. Correspondingly, the optimal value of (P3) is given
by
\begin{align}\label{eq:optimal rate MISO new}
R^*=\left\{\begin{array}{ll} \log(1+\|\mv{h}\|^2P) &
0\leq\bar{Q}\leq |\mv{g}^H\hat{\mv{h}}|^2P \\
\log\left(1+\left(\sqrt{\frac{\bar{Q}}{\|\smv{g}\|^2}}|\alpha_{gh}|+
\sqrt{P-\frac{\bar{Q}}{\|\smv{g}\|^2}}\sqrt{\|\mv{h}\|^2-|\alpha_{gh}|^2}\right)^2\right)
& |\mv{g}^H\hat{\mv{h}}|^2P<\bar{Q}\leq
P\|\mv{g}\|^2.\end{array}\right.
\end{align}
\end{theorem}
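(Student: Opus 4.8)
The plan is to build on Corollary~\ref{corollary:optimal S MISO}: with $\mv{H}\equiv\mv{h}^H$ and $\mv{G}\equiv\mv{g}^H$ it already tells us that some optimal $\mv{S}^*$ for (P3) is rank one, so I may write $\mv{S}=P\mv{v}\mv{v}^H$ with $\|\mv{v}\|=1$. The transmit-power constraint can be taken to hold with equality: for a rank-one $\mv{S}=\mv{q}\mv{q}^H$ with $\|\mv{q}\|^2<P$, rescaling $\mv{q}$ to norm $\sqrt P$ can only increase $|\mv{h}^H\mv{q}|^2$ (hence $\log(1+|\mv{h}^H\mv{q}|^2)$) and $|\mv{g}^H\mv{q}|^2$ (hence keeps $\mathtt{tr}(\mv{G}\mv{S}\mv{G}^H)\geq\bar Q$ satisfied). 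Thus (P3) reduces to
\[
\mathop{\mathtt{max}}_{\smv{v}}\ |\mv{h}^H\mv{v}|^2 \quad \mathtt{s.t.} \quad \|\mv{v}\|=1,\ \ |\mv{g}^H\mv{v}|^2\geq \bar Q/P ,
\]
and $R^*$ will be $\log(1+P\,|\mv{h}^H\mv{v}|^2)$ evaluated at the optimizer.

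Next I would reduce the ambient dimension. Both $|\mv{h}^H\mv{v}|^2$ and $|\mv{g}^H\mv{v}|^2$ depend on $\mv{v}$ only through its projection onto $\mathrm{span}\{\mv{g},\mv{h}\}$, so any component of $\mv{v}$ outside this (at most two-dimensional) subspace merely wastes unit norm and an optimal $\mv{v}$ lies in it. Writing $\mv{v}=a\hat{\mv{g}}+b\hat{\mv{h}}_{g^\bot}$ with $|a|^2+|b|^2=1$, and using $\mv{g}=\|\mv{g}\|\hat{\mv{g}}$ together with $\mv{h}=\alpha_{gh}\hat{\mv{g}}+\|\mv{h}_{g^\bot}\|\hat{\mv{h}}_{g^\bot}$ (so $\|\mv{h}_{g^\bot}\|^2=\|\mv{h}\|^2-|\alpha_{gh}|^2$), the energy constraint becomes exactly $|a|^2\geq \bar Q/(P\|\mv{g}\|^2)$, while $|\mv{h}^H\mv{v}|=|\bar\alpha_{gh}a+\|\mv{h}_{g^\bot}\|b|$. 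For fixed magnitudes $|a|,|b|$ this is maximized by aligning phases — take $b\geq 0$ real and $\angle a=\angle\alpha_{gh}$, which is exactly the phase structure in the stated $\mv{v}$ — so the problem collapses to maximizing the scalar function $f(t)=|\alpha_{gh}|\,t+\|\mv{h}_{g^\bot}\|\sqrt{1-t^2}$ over $t=|a|\in[\sqrt{\bar Q/(P\|\mv{g}\|^2)},\,1]$; nonemptiness of this interval requires $\bar Q\leq P\|\mv{g}\|^2$, precisely the stated range of $\bar Q$.

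The last step is this one-dimensional maximization. On $[0,1]$, $f$ is unimodal with interior maximizer $t^*=|\alpha_{gh}|/\|\mv{h}\|$ and $f(t^*)=\|\mv{h}\|$, corresponding to $\mv{v}=\hat{\mv{h}}$ (maximum-ratio transmission). Since $|\mv{g}^H\hat{\mv{h}}|^2 P=\|\mv{g}\|^2|\alpha_{gh}|^2 P/\|\mv{h}\|^2$, the condition $t^*\geq\sqrt{\bar Q/(P\|\mv{g}\|^2)}$ is exactly $\bar Q\leq|\mv{g}^H\hat{\mv{h}}|^2 P$: in that regime the MRT beamformer is feasible and optimal, giving $R^*=\log(1+\|\mv{h}\|^2P)$; otherwise $f$ is strictly decreasing on the feasible interval, so the optimum is at the left endpoint $t=\sqrt{\bar Q/(P\|\mv{g}\|^2)}$ (the harvested-power constraint is then tight), which yields the second branch for $\mv{v}$, and plugging this into $\log(1+P|\mv{h}^H\mv{v}|^2)$ and absorbing $\sqrt P$ into the square gives the second branch for $R^*$. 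I expect the unimodality and phase-alignment arguments to be entirely routine; the things to watch are (i) the bookkeeping between the $(a,b)$-parametrization and the normalized quantities in the statement, and (ii) the degenerate geometries — when $\mv{h}\parallel\mv{g}$ one has $|\mv{g}^H\hat{\mv{h}}|^2=\|\mv{g}\|^2$, so the whole feasible range falls in the first branch and $\hat{\mv{h}}_{g^\bot}$ need not be defined, whereas when $\mv{h}\perp\mv{g}$ one has $\alpha_{gh}=0$, $\hat{\mv{h}}_{g^\bot}=\hat{\mv{h}}$, and $\angle\alpha_{gh}$ may be chosen arbitrarily.
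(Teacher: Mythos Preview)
Your proposal is correct and complete. The paper itself does not give a proof of this theorem; it simply states that the argument is analogous to that of Theorem~2 in \cite{Zhang} and omits the details. Your approach --- invoking Corollary~\ref{corollary:optimal S MISO} for the rank-one reduction, projecting onto $\mathrm{span}\{\mv{g},\mv{h}\}$, aligning phases, and then solving the resulting one-variable problem --- is exactly the standard route for such two-vector beamforming problems and is almost certainly what the referenced proof does (with the interference constraint inequality reversed). Your bookkeeping, the identification of the threshold $\bar Q=|\mv{g}^H\hat{\mv{h}}|^2P$ with the feasibility of the unconstrained maximizer $t^*=|\alpha_{gh}|/\|\mv{h}\|$, and the treatment of the degenerate cases $\mv{h}\parallel\mv{g}$ and $\mv{h}\perp\mv{g}$ are all in order.
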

\begin{proof}
The proof is similar to that of Theorem 2 in \cite{Zhang}, and is
thus omitted for brevity.
\end{proof}

It is worth noting that in (\ref{eq:optimal S MISO new}), if
$\bar{Q}\leq |\mv{g}^H\hat{\mv{h}}|^2P$, the optimal transmit
beamforming vector is based on the principle of
maximal-ratio-combining (MRC) with respect to the MISO channel
$\mv{h}^H$ from the transmitter to the ID receiver, and in this
case, the harvested power constraint in Problem (P3) is indeed not
active; however, when $\bar{Q}> |\mv{g}^H\hat{\mv{h}}|^2P$, the
optimal beamforming vector is a linear combination of the two
vectors $\hat{\mv{g}}$ and $\hat{\mv{h}}_{g^\bot}$, and the
combining coefficients are designed such that the harvested power
constraint is satisfied with equality.

\begin{figure}
\centering{
 \epsfxsize=5in
    \leavevmode{\epsfbox{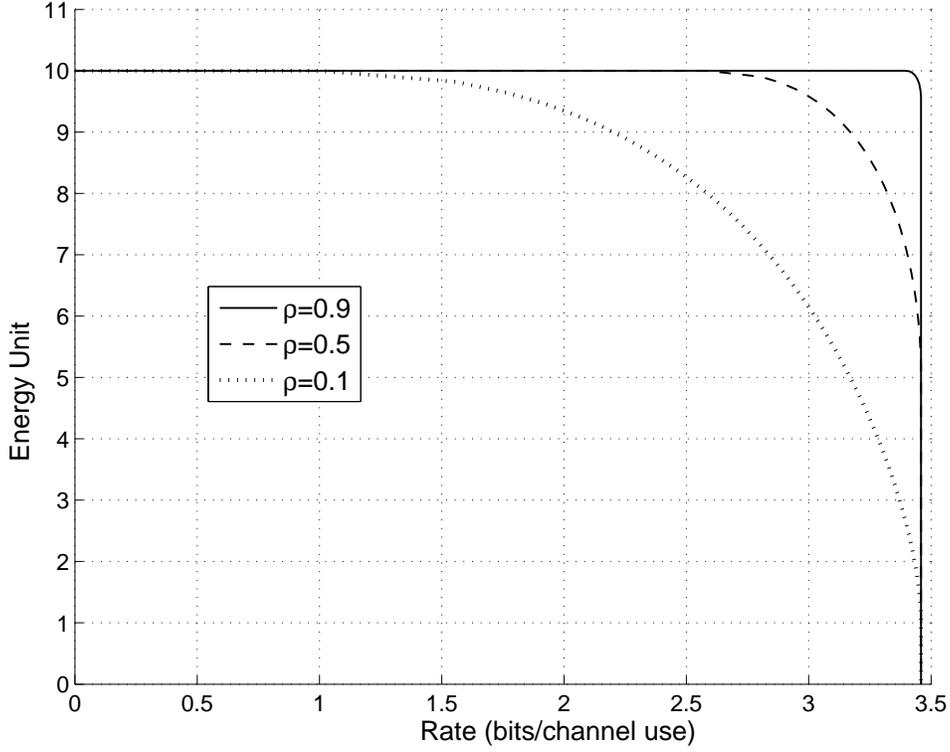}}}
\caption{Rate-energy tradeoff for a MISO broadcast system with
correlated MISO channels to the (separated) EH and ID
receivers.}\label{fig:RE region MISO}
\end{figure}

In Fig. \ref{fig:RE region MISO}, we show the achievable R-E regions
for the case of MISO channels from the transmitter to both EH and ID
receivers. We set $P=10$. For the purpose of exposition, it is
assumed that $\|\mv{h}\|=\|\mv{g}\|=1$ and $|\alpha_{gh}|^2=\rho$,
with $0\leq \rho \leq 1$ denoting the correlation between the two
unit-norm vectors $\mv{h}$ and $\mv{g}$. This channel setup may
correspond to the practical scenario where the EH and ID receivers
are equipped at a single device (but still physically separated),
and as a result their respective MISO channels from the transmitter
have the same power gain but are spatially correlated due to the
insufficient spacing between two separate receiving antennas. From
Theorem \ref{theorem:optimal S MISO new}, the R-E regions for the
three cases of $\rho=0.1,0.5$, and $0.9$ are obtained, as shown in
Fig. \ref{fig:RE region MISO}. Interestingly, it is observed that
increasing $\rho$ enlarges the achievable R-E region, which
indicates that the antenna correlation between the EH and ID
receivers can be a beneficial factor for simultaneous information
and power transfer. Note that in this figure, we express energy and
rate in terms of energy unit and bits/channel use, respectively,
since their practical values can be obtained by appropriate scaling
based on the realistic system parameters as for Fig. \ref{fig:RE
region MIMO}.

\section{Co-Located Receivers}\label{sec:colocated
receivers}

In this section, we address the case where the EH and ID receivers
are co-located, and thus possess the same channel from the
transmitter, i.e., $\mv{G}=\mv{H}$ and thus $N_{\rm EH}=N_{\rm
ID}\triangleq N$. We first examine the optimal solution of Problem
(P3) for this case, from which we obtain an outer bound for the
achievable rate-power pairs in the R-E region. Then, we propose two
practical receiver designs, namely time switching and power
splitting, derive their optimal transmission strategies to maximize
the achievable rate-power pairs, and finally compare the results to
the R-E region outer bound.

\subsection{Performance Outer Bound} \label{subsection:PB}

Consider Problem (P3) with $\mv{G}=\mv{H}$. Recall that the
(reduced) SVD of $\mv{H}$ is given by
$\mv{H}=\mv{U}_H\mv{\Gamma}_H^{1/2}\mv{V}_H^H$, with
$\mv{\Gamma}_H=\mathtt{ diag}(h_1,\ldots,h_{T_2})$, $h_1\geq h_2\geq
\ldots \geq h_{T_2}\geq 0$, and $T_2=\min(M,N)$. From Theorem
\ref{theorem:optimal S}, we obtain the following corollary.
\begin{corollary}\label{corollary:optimal S new}
In the case of co-located EH and ID receivers with $\mv{G}=\mv{H}$,
the optimal solution to Problem (P3) has the form of
$\mv{S}^*=\mv{V}_H\mv{\Sigma}\mv{V}_H^H$, where
$\mv{\Sigma}=\mathtt{diag}(\hat{p}_1,\ldots,\hat{p}_{T_2})$ with the
diagonal elements obtained from the following modified WF power
allocation:
\begin{equation}\label{eq:WF new}
\hat{p}_i=\left(\frac{1}{\mu^*-\lambda^*
h_i}-\frac{1}{h_i}\right)^+, \ \ i=1,\ldots,T_2
\end{equation}
with $\lambda^*$ and $\mu^*$ denoting the optimal dual solutions of
Problem (P3), $\mu^*>\lambda^*h_1$. The corresponding maximum
transmission rate is $R^*=\sum_{i=1}^{T_2}\log(1+h_i\hat{p}_i)$.
\end{corollary}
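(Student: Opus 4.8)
The plan is to specialize Theorem~\ref{theorem:optimal S} to the co-located case $\mv{G}=\mv{H}$ and simplify the resulting expressions. Setting $\mv{G}=\mv{H}$ in the definition of $\mv{A}$ gives $\mv{A}=\mu^*\mv{I}-\lambda^*\mv{H}^H\mv{H}$. The key observation is that $\mv{H}^H\mv{H}$ and $\mv{A}$ are both diagonalized by the unitary basis appearing in the SVD of $\mv{H}$: extending the columns of $\mv{V}_H$ to a full unitary matrix $\bar{\mv{V}}=[\,\mv{V}_H\ \ \mv{V}_H^{\perp}\,]\in\mathbb{C}^{M\times M}$, we have $\mv{H}^H\mv{H}=\bar{\mv{V}}\,\mathtt{diag}(h_1,\ldots,h_{T_2},0,\ldots,0)\,\bar{\mv{V}}^H$, so that $\mv{A}=\bar{\mv{V}}\,\mathtt{diag}(\mu^*-\lambda^*h_1,\ldots,\mu^*-\lambda^*h_{T_2},\mu^*,\ldots,\mu^*)\,\bar{\mv{V}}^H$. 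Since the hypothesis $\mu^*>\lambda^*h_1$ of Theorem~\ref{theorem:optimal S} guarantees $\mu^*-\lambda^*h_i>0$ for all $i$, the matrix $\mv{A}$ is positive definite and $\mv{A}^{-1/2}$ is well defined, with the same eigenvectors $\bar{\mv{V}}$ and eigenvalues $(\mu^*-\lambda^*h_i)^{-1/2}$ on the block spanned by $\mv{V}_H$.

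Next I would obtain the SVD of $\mv{H}\mv{A}^{-1/2}$ that Theorem~\ref{theorem:optimal S} calls for. Writing $\mv{H}=\mv{U}_H\mv{\Gamma}_H^{1/2}\mv{V}_H^H$ and using $\mv{V}_H^H\bar{\mv{V}}=[\,\mv{I}_{T_2}\ \ \mv{0}\,]$, one computes $\mv{V}_H^H\mv{A}^{-1/2}=\mathtt{diag}\big((\mu^*-\lambda^*h_i)^{-1/2}\big)\mv{V}_H^H$, hence $\mv{H}\mv{A}^{-1/2}=\mv{U}_H\,\mathtt{diag}\big(\sqrt{h_i/(\mu^*-\lambda^*h_i)}\,\big)\mv{V}_H^H$, which is already in (reduced) SVD form. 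Therefore $\tilde{\mv{V}}=\mv{V}_H$ and $\tilde{h}_i=h_i/(\mu^*-\lambda^*h_i)$; moreover, since $x\mapsto x/(\mu^*-\lambda^*x)$ is increasing on $[0,\mu^*/\lambda^*)$, the ordering $\tilde{h}_1\ge\cdots\ge\tilde{h}_{T_2}\ge 0$ assumed in Theorem~\ref{theorem:optimal S} holds automatically.

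Finally I would substitute back into (\ref{eq:optimal S}). Using $\mv{A}^{-1/2}\mv{V}_H=\mv{V}_H\,\mathtt{diag}\big((\mu^*-\lambda^*h_i)^{-1/2}\big)$, the product $\mv{S}^*=\mv{A}^{-1/2}\tilde{\mv{V}}\tilde{\mv{\Lambda}}\tilde{\mv{V}}^H\mv{A}^{-1/2}$ collapses to $\mv{S}^*=\mv{V}_H\mv{\Sigma}\mv{V}_H^H$ with $\hat{p}_i=\tilde{p}_i/(\mu^*-\lambda^*h_i)$; plugging in $\tilde{p}_i=(1-1/\tilde{h}_i)^+=\big(1-(\mu^*-\lambda^*h_i)/h_i\big)^+$ gives precisely $\hat{p}_i=\big(\frac{1}{\mu^*-\lambda^*h_i}-\frac{1}{h_i}\big)^+$, i.e.\ (\ref{eq:WF new}). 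The rate expression then follows either from $R^*=\sum_i\log(1+\tilde{h}_i\tilde{p}_i)$ (established just after Theorem~\ref{theorem:optimal S}) together with the identity $\tilde{h}_i\tilde{p}_i=h_i\hat{p}_i$, or directly by noting $\mv{H}\mv{S}^*\mv{H}^H=\mv{U}_H\,\mathtt{diag}(h_i\hat{p}_i)\,\mv{U}_H^H$, so that $R^*=\log|\mv{I}+\mv{H}\mv{S}^*\mv{H}^H|=\sum_{i=1}^{T_2}\log(1+h_i\hat{p}_i)$.

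There is no substantive obstacle here: the corollary is a direct computation from Theorem~\ref{theorem:optimal S}. The only points requiring a little care are (i) invoking $\mu^*>\lambda^*h_1$ to ensure $\mv{A}\succ0$ so that $\mv{A}^{-1/2}$ exists, and (ii) in the rank-deficient case $M>N$ (so $T_2=N<M$), padding $\mv{V}_H$ to the full unitary $\bar{\mv{V}}$ so that all the manipulations involving the ``reduced'' SVD factors of $\mv{H}$ and of $\mv{A}$ are legitimate; the extra $M-T_2$ eigenvalues of $\mv{A}$ equal $\mu^*$ and play no role, since the corresponding columns of $\bar{\mv{V}}$ are orthogonal to the range of $\mv{H}^H$.
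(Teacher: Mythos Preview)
Your proof is correct and follows essentially the same route as the paper's: specialize Theorem~\ref{theorem:optimal S} to $\mv{G}=\mv{H}$, diagonalize $\mv{A}$ in the eigenbasis of $\mv{H}^H\mv{H}$, read off the SVD of $\mv{H}\mv{A}^{-1/2}$ as $\mv{U}_H\,\mathtt{diag}\big(\sqrt{h_i/(\mu^*-\lambda^*h_i)}\big)\mv{V}_H^H$, and substitute back into (\ref{eq:optimal S}). Your explicit completion of $\mv{V}_H$ to a full unitary $\bar{\mv{V}}$ when $M>T_2$ is in fact a bit more careful than the paper's own write-up, which tacitly assumes $\mv{V}_H\mv{V}_H^H=\mv{I}$ in writing $\mv{A}=\mv{V}_H(\mu^*\mv{I}-\lambda^*\mv{\Gamma}_H)\mv{V}_H^H$.
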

\begin{proof}
See Appendix \ref{appendix:proof optimal S co-located receivers}.
\end{proof}

The algorithm in Table \ref{table} for solving Problem (P3) with
arbitrary $\mv{G}$ and $\mv{H}$ can be simplified to solve the
special case with $\mv{G}=\mv{H}$. Corollary \ref{corollary:optimal
S new} reveals that for Problem (P3) in the case of $\mv{G}=\mv{H}$,
the optimal transmission strategy is in general still spatial
multiplexing over the eigenmodes of the MIMO channel $\mv{H}$ as for
Problem (P2), while the optimal tradeoffs between information
transfer and power transfer are achieved by varying the power levels
allocated into different eigenmodes, as shown in (\ref{eq:WF new}).
It is interesting to observe that the power allocation in
(\ref{eq:WF new}) reduces to the conventional WF solution in
(\ref{eq:WF}) with a constant water-level when $\lambda^*=0$, i.e.,
the harvested power constraint in Problem (P3) is inactive with the
optimal power allocation. However, when $\lambda^*>0$ and thus the
harvested power constraint is active corresponding to the
Pareto-optimal regime of our interest, the power allocation in
(\ref{eq:WF new}) is observed to have a {\it non-decreasing}
water-level as $h_i$'s increase. Note that this modified WF policy
has also been shown in \cite{Grover} for power allocation in
frequency-selective AWGN channels.

Using Corollary \ref{corollary:optimal S new}, we can characterize
all the boundary points of the R-E region $\mathcal{C}_{\rm R-E}(P)$
defined in (\ref{eq:RE region}) for the case of co-located receivers
with $\mv{G}=\mv{H}$. For example, if the total transmit power is
allocated to the channel with the largest gain $h_1$, i.e.,
$\hat{p}_1=P$ and $\hat{p}_i=0, i=2,\ldots,T_2$, the maximum
harvested power $Q_{\max}=Ph_1$ is achieved by transmit beamforming.
On the other hand, if transmit spatial multiplexing is applied with
the conventional WF power allocation given in (\ref{eq:WF new}) with
$\lambda^*=0$, the corresponding $R^*$ becomes the maximum
transmission rate, $R_{\max}$. However, unlike the case of separated
EH and ID receivers in which the entire boundary of
$\mathcal{C}_{\rm R-E}(P)$ is achievable, in the case of co-located
receivers, except the two boundary rate-power pairs $(R_{\max},0)$
and $(0, Q_{\max})$, all the other boundary pairs of
$\mathcal{C}_{\rm R-E}(P)$ may not be achievable in practice. Note
that these boundary points are achievable if and only if (iff) the
following premise is true: the power of the received signal across
all antennas is totally harvested, and at the same time the carried
information with a transmission rate up to the MIMO channel capacity
(for a given transmit covariance) is decodable. However, existing EH
circuits are not yet able to directly decode the information carried
in the RF-band signal, even for the SISO channel case; as a result,
how to achieve the remaining boundary rate-power pairs of
$\mathcal{C}_{\rm R-E}(P)$ in the MIMO case with the co-located EH
and ID receiver remains an interesting open problem. Therefore, in
the case of co-located receivers, the boundary of $\mathcal{C}_{\rm
R-E}(P)$ given by Corollary \ref{corollary:optimal S new} in general
only serves as an {\it outer bound} for the achievable rate-power
pairs with practical receiver designs, as will be investigated in
the following subsections.

\subsection{Time Switching}

First, as shown in Fig. \ref{fig:practical schemes}(a), we consider
the {\it time switching} (TS) scheme, with which each transmission
block is divided into two orthogonal time slots, one for
transferring power and the other for transmitting data. The
co-located EH and ID receiver switches its operations periodically
between harvesting energy and decoding information between the two
time slots. It is assumed that time synchronization has been
perfectly established between the transmitter and the receiver, and
thus the receiver can synchronize its function switching with the
transmitter. With orthogonal transmissions, the transmitted signals
for the EH receiver and ID receiver can be designed separately, but
subject to a total transmit power constraint. Let $\alpha$ with
$0\leq \alpha \leq 1$ denote the percentage of transmission time
allocated to the EH time slot. We then consider the following two
types of power constraints at the transmitter:

\begin{itemize}
\item {\it Fixed power constraint}: The transmitted signals to the ID and EH receivers have the same fixed power constraint given
by $\mathtt{tr}(\mv{S}_1)\leq P$, and $\mathtt{tr}(\mv{S}_2)\leq P$,
where $\mv{S}_1$ and $\mv{S}_2$ denote the transmit covariance
matrices for the ID and EH transmission time slots, respectively.

\item {\it Flexible power constraint}: The transmitted signals to the ID and EH receivers can have different power constraints provided that their average
consumed power is below $P$, i.e.,
$(1-\alpha)\mathtt{tr}(\mv{S}_1)+\alpha\mathtt{tr}(\mv{S}_2)\leq P$.
\end{itemize}

Note that the TS scheme under the fixed power constraint has been
considered in \cite{Varshney Thesis} for the single-antenna AWGN
channel. The achievable R-E regions for the TS scheme with the fixed
(referred to as ${\rm TS_1}$) vs. flexible (referred to as ${\rm
TS_2}$) power constraints are then given as follows:
\begin{align}\label{eq:RE region time switching 1} \mathcal{C}_{\rm
R-E}^{\rm TS_1}(P)\triangleq \bigcup_{0\leq\alpha\leq
1}\bigg\{(R,Q):
R\leq (1-\alpha)\log|\mv{I}+\mv{H}\mv{S}_1\mv{H}^H |, \nonumber \\
Q\leq
\alpha\mathtt{tr}(\mv{H}\mv{S}_2\mv{H}^H),\mathtt{tr}(\mv{S}_1)\leq
P, \mathtt{tr}(\mv{S}_2)\leq P \bigg\}
\end{align}
\begin{align}\label{eq:RE region time switching 2}
\mathcal{C}_{\rm R-E}^{\rm TS_2}(P)\triangleq
\bigcup_{0\leq\alpha\leq 1}\bigg\{(R,Q): R\leq
(1-\alpha)\log|\mv{I}+\mv{H}\mv{S}_1\mv{H}^H |, \nonumber \\ Q\leq
\alpha\mathtt{tr}(\mv{H}\mv{S}_2\mv{H}^H),
(1-\alpha)\mathtt{tr}(\mv{S}_1)+\alpha\mathtt{tr}(\mv{S}_2)\leq P
\bigg\}.
\end{align}

It is worth noting that $\mathcal{C}_{\rm R-E}^{\rm
TS_1}(P)\subseteq\mathcal{C}_{\rm R-E}^{\rm TS_2}(P)$ must be true
since any pair of $\mv{S}_1\succeq 0$ and $\mv{S}_2\succeq 0$ that
satisfy the fixed power constraint will satisfy the flexible power
constraint, but not vice versa. The optimal transmit covariance
matrices $\mv{S}_1$ and $\mv{S}_2$ to achieve the boundary of
$\mathcal{C}_{\rm R-E}^{\rm TS_1}(P)$ with the fixed power
constraint are given in Section \ref{sec:system model} (assuming
$\mv{G}=\mv{H}$). In fact, the boundary of $\mathcal{C}_{\rm
R-E}^{\rm TS_1}(P)$ is simply a straight line connecting the two
points $(R_{\max},0)$ and $(0, Q_{\max})$ (cf. Fig. \ref{fig:RE
region co-located receiver 1}) by sweeping $\alpha$ from 0 to 1.

Similarly, for the case of flexible power constraint, the transmit
covariance solutions for $\mv{S}_1$ and $\mv{S}_2$ to achieve any
boundary point of $\mathcal{C}_{\rm R-E}^{\rm TS_2}(P)$ can be shown
to have the same set of eigenvectors as those given in Section
\ref{sec:system model} (assuming $\mv{G}=\mv{H}$), respectively;
however, the corresponding time allocation for $\alpha$ and power
allocation for $\mv{S}_1$ and $\mv{S}_2$ remain unknown. We thus
have the following proposition.
\begin{proposition}\label{proposition:TS zero alpha}
In the case of flexible power constraint, except the two points
$(R_{\max},0)$ and $(0, Q_{\max})$, all other boundary points of the
region $\mathcal{C}_{\rm R-E}^{\rm TS_2}(P)$ are achieved as
$\alpha\rightarrow 0$; accordingly, $\mathcal{C}_{\rm R-E}^{\rm
TS_2}(P)$ can be simplified as
\begin{align}\label{eq:RE region time switching 2 new}
\mathcal{C}_{\rm R-E}^{\rm TS_2}(P)=\bigg\{(R, Q): R\leq
\log|\mv{I}+\mv{H}\mv{S}_1\mv{H}^H |, \mathtt{tr}(\mv{S}_1)\leq
(P-Q/h_1), \mv{S}_1\succeq 0 \bigg\}.\end{align}
\end{proposition}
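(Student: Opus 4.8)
The plan is to show two things: (i) that for the flexible power constraint, along the Pareto boundary it is optimal to push $\alpha \to 0$ (except at the two extreme points), and (ii) that in this limit the region collapses to the set in \eqref{eq:RE region time switching 2 new}. The starting observation is that in $\mathcal{C}_{\rm R-E}^{\rm TS_2}(P)$, the harvested power is $Q = \alpha\,\mathtt{tr}(\mv{H}\mv{S}_2\mv{H}^H)$, and for fixed $\alpha$ the largest value of $\mathtt{tr}(\mv{H}\mv{S}_2\mv{H}^H)$ subject to $\mathtt{tr}(\mv{S}_2) \le P_2$ is $h_1 P_2$, achieved by energy beamforming $\mv{S}_2 = P_2 \mv{v}_1\mv{v}_1^H$ (cf. Proposition~\ref{proposition:opt S EH} with $\mv{G}=\mv{H}$). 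So to deliver a target energy $Q$ using a fraction $\alpha$ of the block, the transmitter needs $\alpha P_2 = Q/h_1$ worth of ``power-time,'' i.e. the EH slot consumes average power $\alpha\,\mathtt{tr}(\mv{S}_2) = Q/h_1$ regardless of how small $\alpha$ is. That leaves average power $P - Q/h_1$ for the ID slot, spread over a fraction $1-\alpha$ of the block, so $\mathtt{tr}(\mv{S}_1) \le (P - Q/h_1)/(1-\alpha)$.

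First I would fix the delivered energy $Q$ and ask: what is the maximum rate $R$? Using the above, $R \le (1-\alpha)\log\bigl|\mv{I} + \mv{H}\mv{S}_1\mv{H}^H\bigr|$ with $\mathtt{tr}(\mv{S}_1) \le (P-Q/h_1)/(1-\alpha)$. Writing $\mv{S}_1 = \mv{S}_1'/(1-\alpha)$ with $\mathtt{tr}(\mv{S}_1') \le P - Q/h_1$, the rate is $(1-\alpha)\log\bigl|\mv{I} + \mv{H}\mv{S}_1'\mv{H}^H/(1-\alpha)\bigr|$. The key inequality — exactly the one invoked in Remark~\ref{remark:power rate relationship} — is that for any $\mv{S}_1' \succeq 0$ and $0 < \beta \le 1$, $\beta \log\bigl|\mv{I} + \mv{H}\mv{S}_1'\mv{H}^H/\beta\bigr| \le \log\bigl|\mv{I} + \mv{H}\mv{S}_1'\mv{H}^H\bigr|$, with equality iff $\beta = 1$, because $f(\mv{X}) = \log|\mv{I}+\mv{X}|$ is concave and $f(\mv{0})=0$ (so $\beta f(\mv{X}/\beta) = \beta f(\mv{X}/\beta) + (1-\beta)f(\mv{0}) \le f(\mv{X})$ by concavity, i.e. perspective/super-additivity along the ray). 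Applying this with $\beta = 1-\alpha$ shows $R$ is strictly increasing as $\alpha \downarrow 0$ (for any $Q$ with $0 < Q < Q_{\max}$, so that $\mv{S}_1' \ne \mv{0}$), and the supremum is $\log\bigl|\mv{I}+\mv{H}\mv{S}_1'\mv{H}^H\bigr|$ with $\mathtt{tr}(\mv{S}_1') \le P - Q/h_1$ — but it is attained only in the limit $\alpha \to 0$, not at any $\alpha > 0$. This gives the inclusion $\mathcal{C}_{\rm R-E}^{\rm TS_2}(P) \subseteq$ RHS of \eqref{eq:RE region time switching 2 new}, and since the RHS is a closed set and every point of it is approached by feasible TS$_2$ points as $\alpha \to 0$, the two coincide (as $\mathcal{C}_{\rm R-E}^{\rm TS_2}(P)$ is defined as a union over $\alpha$ and the claimed identity is an equality of the achievable region, understood with the appropriate closure).

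The step I expect to need the most care is justifying that one may restrict $\mv{S}_2$ to energy beamforming and reduce the two transmit-power budgets to the single ``ID slot gets $P - Q/h_1$'' bookkeeping — i.e. that no cleverer split of power between the two slots, or non-beamforming $\mv{S}_2$, can do better. This follows because $Q$ only ``sees'' $\mv{S}_2$ through $\alpha\,\mathtt{tr}(\mv{H}\mv{S}_2\mv{H}^H) \le \alpha h_1 \mathtt{tr}(\mv{S}_2)$, so for any feasible point the EH slot necessarily costs at least $Q/h_1$ in average power, hence at most $P - Q/h_1$ is left for $\mv{S}_1$; and conversely that bound is achievable. A secondary subtlety is the boundary behavior at $Q = 0$ (where $\alpha = 0$ is genuinely optimal and gives $R_{\max}$) and at $Q = Q_{\max} = Ph_1$ (forcing $\mathtt{tr}(\mv{S}_1) = 0$, $R = 0$, with $\alpha = 1$), which are exactly the two excepted points; I would state these two explicitly and note that for $0 < Q < Q_{\max}$ the $\alpha \to 0$ claim is the substantive content. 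The rest is the routine concavity inequality from Remark~\ref{remark:power rate relationship}, reused verbatim.
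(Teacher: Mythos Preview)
Your proposal is correct and follows essentially the same approach as the paper: first reduce the EH slot to energy beamforming so that achieving harvested power $Q$ costs exactly $Q/h_1$ in average power independently of $\alpha$, then use the concavity inequality from Remark~\ref{remark:power rate relationship} (with $\beta=1-\alpha$) to conclude that the ID rate is maximized as $\alpha\to 0$. Your treatment is in fact slightly more explicit than the paper's, which simply invokes ``the concavity of the $\log(\cdot)$ function'' without writing out the change of variables $\mv{S}_1=\mv{S}_1'/(1-\alpha)$ or the closure discussion.
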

\begin{proof}
See Appendix \ref{appendix:proof zero alpha}.
\end{proof}

The corresponding optimal power allocation for $\mv{S}_1$ and
$\mv{S}_2$ can be easily obtained given (\ref{eq:RE region time
switching 2 new}) and are thus omitted for brevity. Proposition
\ref{proposition:TS zero alpha} suggests that to achieve any
boundary point $(R,Q)$ of $\mathcal{C}_{\rm R-E}^{\rm TS_2}(P)$ with
$R < R_{\max}$ and $Q < Q_{\max}$, the portion of transmission time
$\alpha$ allocated to power transfer in each block should
asymptotically go to zero when $n\rightarrow \infty$, where $n$
denotes the number of transmitted symbols in each block. For
example, by allocating $O(\log n)$ symbols per block for power
transfer and the remaining symbols for information transmission
yields $\alpha=\log n/n\rightarrow 0$ as $n\rightarrow \infty$,
which satisfies the optimality condition given in Proposition
\ref{proposition:TS zero alpha}.

It is worth noting that the boundary of $\mathcal{C}_{\rm R-E}^{\rm
TS_2}(P)$ in the flexible power constraint case is achieved under
the assumption that the transmitter and receiver can both operate in
the regime of infinite power in the EH time slot due to
$\alpha\rightarrow 0$, which cannot be implemented with practical
power amplifiers. Hence, a more feasible region for
$\mathcal{C}_{\rm R-E}^{\rm TS_2}(P)$ is obtained by adding
peak\footnote{Note that the peak power constraint in this context is
different from the signal amplitude constraint considered in
\cite{Varshney}, \cite{Varshney Thesis}.} transmit power constraints
in (\ref{eq:RE region time switching 2}) as
$\mathtt{tr}(\mv{S}_1)\leq P_{\rm peak}$ and
$\mathtt{tr}(\mv{S}_2)\leq P_{\rm peak}$, with $P_{\rm peak}\geq P$.
Similar to Proposition \ref{proposition:TS zero alpha}, it can be
shown that the boundary of the achievable R-E region in this case,
denoted by $\mathcal{C}_{\rm R-E}^{\rm TS_2}(P,P_{\rm peak})$, is
achieved by $\alpha=Q/(h_1P_{\rm peak})$. Note that we can
equivalently denote the achievable R-E region $\mathcal{C}_{\rm
R-E}^{\rm TS_2}(P)$ defined in (\ref{eq:RE region time switching 2})
or (\ref{eq:RE region time switching 2 new}) without any peak power
constraint as $\mathcal{C}_{\rm R-E}^{\rm TS_2}(P,\infty)$.

\subsection{Power Splitting}

\begin{figure}
\begin{center}
\scalebox{0.6}{\includegraphics*[77pt,368pt][496pt,738pt]{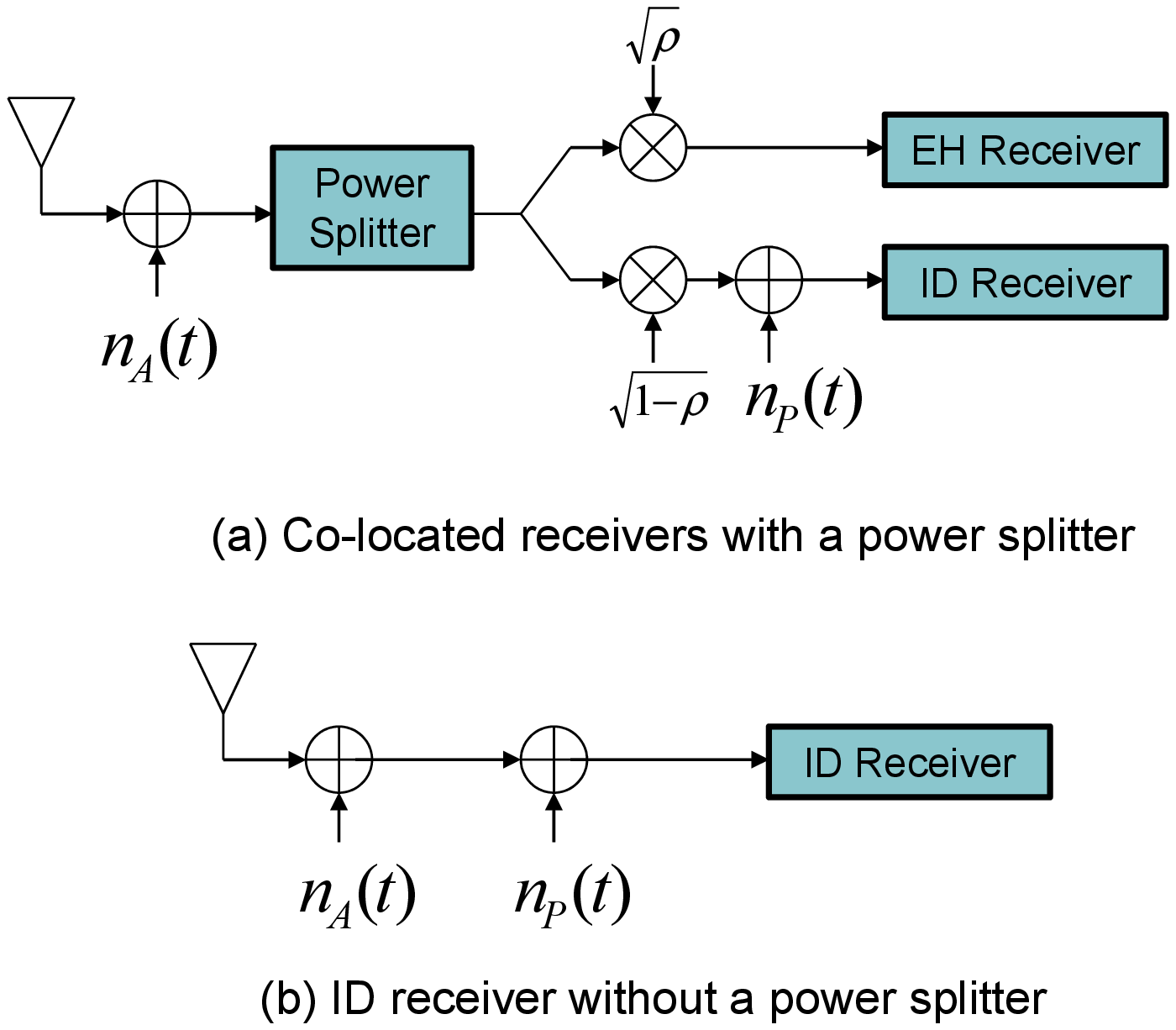}}
\end{center}
\caption{Receiver operations with/without a power splitter (the
energy harvested due to the receiver noise is ignored for EH
receiver).} \label{fig:PS}
\end{figure}

Next, we propose an alternative receiver design called {\it power
splitting} (PS), whereby the power and information transfer to the
co-located EH and ID receivers are simultaneously achieved via a set
of power splitting devices, one device for each receiving antenna,
as shown in Fig. \ref{fig:practical schemes}(b). In order to gain
more insight into the PS scheme, we consider first the simple case
of a single-antenna AWGN channel with co-located ID and EH
receivers, which is shown in Fig. \ref{fig:PS}(a). For the ease of
comparison, the case of solely information transfer with one single
ID receiver is also shown in Fig. \ref{fig:PS}(b).

The receiver operations in Fig. \ref{fig:PS}(a) are explained as
follows: The received signal from the antenna is first corrupted by
a Gaussian noise denoted by $n_A(t)$ at the RF-band, which is
assumed to have zero mean and equivalent baseband power
$\sigma_A^2$. The RF-band signal is then fed into a power splitter,
which is assumed to be perfect without any noise induced. After the
power splitter, the portion of signal power split to the EH receiver
is denoted by $\rho$, and that to the ID receiver by $1-\rho$. The
signal split to the ID receiver then goes through a sequence of
standard operations (see, e.g., \cite{Proakis}) to be converted from
the RF band to baseband. During this process, the signal is
additionally corrupted by another noise $n_P(t)$, which is
independent of $n_A(t)$ and assumed to be Gaussian and have zero
mean and variance $\sigma_P^2$. To be consistent to the case with
solely the ID receiver, it is reasonable to assume that the antenna
noise $n_A(t)$ and processing noise $n_P(t)$ have the same
distributions in both Figs. \ref{fig:PS}(a) and \ref{fig:PS}(b). It
is further assumed that $\sigma_A^2+\sigma_P^2=1$ to be consistent
with the system model introduced in Section \ref{sec:system model}.

For this simple SISO AWGN channel, we denote the transmit power
constraint by $P$ and the channel power gain by $h$. It is then easy
to compute the R-E region outer bound $\mathcal{C}_{\rm R-E}(P)$ for
this channel with co-located ID and EH receivers, which is simply a
box specified by three vertices $(0, Q_{\max})$, $(R_{\max},0)$ and
$(R_{\max}, Q_{\max})$, with $Q_{\max}=Ph$ and
$R_{\max}=\log(1+Ph)$. Interestingly, we will show next that under
certain conditions, the PS scheme can in fact achieve all the
rate-energy pairs in this R-E region outer bound; without loss of
generality, it suffices to show that the vertex point $(R_{\max},
Q_{\max})$ is achievable.

With reference to Fig. \ref{fig:PS}(a), we discuss the PS scheme in
the following three regimes with different values of antenna and
processing noise power.
\begin{itemize}
\item $\sigma_A^2\ll \sigma_P^2$ (Case I): In this ideal case with perfect receiving antenna,
the antenna noise can be ignored and thus we have $\sigma_A^2=0$ and
$\sigma_P^2=1$. Accordingly, it is easy to show that the SNR,
denoted by $\tau$, at the ID receiver in Fig. \ref{fig:PS}(a) is
$(1-\rho)Ph$. The achievable R-E region in this case is then given
by $\mathcal{C}_{\rm R-E}^{\rm PS, I}(P) \triangleq \bigcup_{0\leq
\rho\leq 1} \{(R,Q): R\leq \log(1+(1-\rho)Ph), Q\leq \rho Ph\}$.
This region can be shown to coincide with the R-E region for the TS
scheme with the flexible power constraint given by (\ref{eq:RE
region time switching 2 new}) for the SISO case.

\item $0<\sigma_A^2<1$ (Case II): This is the most practically valid case. Since $\sigma_P^2=1-\sigma_A^2$, we can show that $\tau$ in
this case is given by
$\tau=(1-\rho)Ph/((1-\rho)\sigma_A^2+\sigma_P^2)=(1-\rho)Ph/(1-\rho\sigma_A^2)$.
Accordingly, the achievable R-E region in this case is given by
$\mathcal{C}_{\rm R-E}^{\rm PS, II}(P) \triangleq \bigcup_{0\leq
\rho\leq 1} \{(R,Q): R\leq \log(1+\tau), Q\leq \rho Ph\}$. It is
easy to show that $\mathcal{C}_{\rm R-E}^{\rm PS, II}(P)$ enlarges
strictly as $\sigma_A^2$ increases from 0 to 1.

\item $\sigma_A^2\gg \sigma_P^2$ (Case III): In this ideal case with
perfect RF-to-baseband signal conversion, the processing noise can
be ignored and thus we have $\sigma_P^2=0$ and $\sigma_A^2=1$. In
this case, the SNR for the ID receiver is given by $\tau=Ph$,
regardless of the value of $\rho$. Thus, to maximize the power
transfer, ideally we should set $\rho\rightarrow 1$, i.e., splitting
infinitesimally small power to the ID receiver since both the signal
and antenna noise are scaled identically by the power splitter and
there is no additional processing noise induced after the power
splitting. With $\rho=1$, the achievable R-E region in this case is
given by $\mathcal{C}_{\rm R-E}^{\rm PS, III}(P) \triangleq \{(R,Q):
R\leq \log(1+Ph), Q\leq Ph\}$, which becomes identical to the R-E
region outer bound $\mathcal{C}_{\rm R-E}(P)$ (which is a box as
defined earlier).
\end{itemize}

Therefore, we know from the above discussions that only for the case
of noise-free RF-band to baseband processing (i.e., Case III), the
PS scheme achieves the R-E region outer bound and is thus optimal.
However, in practice, such a condition can never be met perfectly,
and thus the R-E region outer bound $\mathcal{C}_{\rm R-E}(P)$ is in
general still non-achievable with practical PS receivers. In the
following, we will study further the achievable R-E region by the PS
scheme for the more general case of MIMO channels. It is not
difficult to show that if each receiving antenna satisfies the
condition in Case III, the R-E region outer bound $\mathcal{C}_{\rm
R-E}(P)$ defined in (\ref{eq:RE region}) with $\mv{G}=\mv{H}$ is
achievable for the MIMO case by the PS scheme (with each receiving
antenna to set $\rho=1$). For a more practical purpose, we consider
in the rest of this section the ``worst'' case performance of the PS
scheme (i.e., Case I in the above), when the noiseless antenna is
assumed (which leads to the smallest R-E region for the SISO AWGN
channel case). The obtained R-E region will thus provide the
performance lower bound for the PS scheme with practical receiver
circuits. In this case, since there is no antenna noise and the
processing noise is added after the power splitting, it is
equivalent to assume that the aggregated receiver noise power
remains unchanged with a power splitter at each receiving antenna.
Let $\rho_i$ with $0\leq \rho_i\leq 1$ denote the portion of power
split to the EH receiver at the $i$th receiving antenna, $1\leq
i\leq N$. The achievable R-E region for the PS scheme (in the worst
case) is thus given by
\begin{align}\label{eq:RE region power splitting}
\mathcal{C}_{\rm R-E}^{\rm PS}(P) \triangleq \bigcup_{0\leq
\rho_i\leq 1, \forall i} \bigg\{(R,Q):   R\leq
\log|\mv{I}+\bar{\mv{\Lambda}}_{\rho}^{1/2}\mv{H}\mv{S}\mv{H}^H\bar{\mv{\Lambda}}_{\rho}^{1/2}|,
Q\leq \mathtt{tr}(\mv{\Lambda}_{\rho}\mv{H}\mv{S}\mv{H}^H),
\mathtt{tr}(\mv{S})\leq P, \mv{S}\succeq 0 \bigg\}
\end{align}
where $\mv{\Lambda}_{\rho}=\mathtt{diag}(\rho_1, \ldots, \rho_{N})$,
and $\bar{\mv{\Lambda}}_{\rho}=\mv{I}-\mv{\Lambda}_{\rho}$.

Note that the two points $(R_{\max},0)$ and $(0, Q_{\max})$ on the
boundary of $\mathcal{C}_{\rm R-E}^{\rm PS}(P)$ can be simply
achieved with $\rho_i=0, \forall i$, and $\rho_i=1, \forall i$,
respectively, with the corresponding transmit covariance matrices
given in Section \ref{sec:system model} (with $\mv{G}=\mv{H}$),
similar to the TS case. All the other boundary points of
$\mathcal{C}_{\rm R-E}^{\rm PS}(P)$ can be obtained as follows: Let
$\mv{H}'=\bar{\mv{\Lambda}}_{\rho}^{1/2}\mv{H}$,
$\mv{G}'=\mv{\Lambda}_{\rho}^{1/2}\mv{H}$, and $\mathcal{R}_{\rm
R-E}^{\rm PS}(P, \{\rho_i\})$ denote the achievable R-E region with
PS for a given set of $\rho_i$'s. Then, we can obtain the boundary
of $\mathcal{R}_{\rm R-E}^{\rm PS}(P, \{\rho_i\})$ by solving
similar problems like Problem (P3) (with $\mv{H}$ and $\mv{G}$
replaced by $\mv{H}'$ and $\mv{G}'$, respectively). Finally, the
boundary of $\mathcal{C}_{\rm R-E}^{\rm PS}(P)$ can be obtained by
taking a union operation over different $\mathcal{R}^{\rm
PS}(P,\{\rho_i\})$'s with all possible $\rho_i$'s.

In particular, we consider two special cases of the PS scheme: i)
{\it Uniform Power Splitting} (UPS) with $\rho_i=\rho, \forall i$,
and $0\leq \rho \leq 1$; and ii) {\it On-Off Power Splitting} with
$\rho_i\in \{0,1\}, \forall i$, i.e., $\rho_i$ taking the value of
either 0 or 1. For the case of on-off power splitting, let
$\Omega\subseteq\{1,\ldots,N\}$ denote one subset of receiving
antennas with $\rho_i=1$; then $\bar{\Omega}=\{1,\ldots,N\}-\Omega$
denotes the other subset of receiving antennas with $\rho_i=0$.
Clearly, $\Omega$ and $\bar{\Omega}$ specify the sets of receiving
antennas switched to EH and ID receivers, respectively; thus, the
on-off power splitting is also termed {\it Antenna Switching} (AS).

Let $\mathcal{R}^{\rm UPS}(P,\rho)$ denote the achievable R-E region
for the UPS scheme with any fixed $\rho$, and $\mathcal{C}_{\rm
R-E}^{\rm UPS}(P)$ be the R-E region by taking the union of all
$\mathcal{R}^{\rm UPS}(P,\rho)$'s over $0\leq \rho \leq 1$.
Furthermore, let $\mathcal{R}^{\rm AS}(P,\Omega)$ denote the
achievable R-E region for the AS (or On-Off Power Splitting) scheme
with a given pair of $\Omega$ and $\bar{\Omega}$. It is not
difficult to see that for any $P>0$, $\mathcal{C}_{\rm R-E}^{\rm
UPS}(P)\subseteq\mathcal{C}_{\rm R-E}^{\rm PS}(P)$, and
$\mathcal{R}^{\rm AS}(P,\Omega)\subseteq\mathcal{C}_{\rm R-E}^{\rm
PS}(P), \forall \Omega$, while $\mathcal{C}_{\rm R-E}^{\rm
UPS}(P)=\mathcal{C}_{\rm R-E}^{\rm PS}(P)$ for the case of MISO/SISO
channel of $\mv{H}$. Moreover, the following proposition shows that
for the case of SIMO channel of $\mv{H}$, $\mathcal{C}_{\rm
R-E}^{\rm UPS}(P)=\mathcal{C}_{\rm R-E}^{\rm PS}(P)$ is also true.

\begin{proposition}\label{proposition:SIMO}
In the case of co-located EH and ID receivers with a SIMO channel
$\mv{H}\equiv\mv{h}\in\mathbb{C}^{N\times 1}$, for any $P\geq 0$,
$\mathcal{C}_{\rm R-E}^{\rm UPS}(P)=\mathcal{C}_{\rm R-E}^{\rm
PS}(P)=\{(R,Q): R\leq \log(1+(\|\mv{h}\|^2P-Q)), 0\leq Q \leq
\|\mv{h}\|^2P\}$.
\end{proposition}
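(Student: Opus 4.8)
The plan is to show a chain of inclusions and identities. First I would observe that in the SIMO case the channel matrix is the column vector $\mv{h}\in\mathbb{C}^{N\times 1}$, so the transmit covariance $\mv{S}$ is a scalar; by the transmit power constraint we may write $\mv{S}=P$ (allocating less power can never help either objective here, so the boundary is attained at $\mathtt{tr}(\mv{S})=P$). Then $\mv{H}\mv{S}\mv{H}^H = P\,\mv{h}\mv{h}^H$, and with a power-splitting vector $\mv{\Lambda}_\rho=\mathtt{diag}(\rho_1,\dots,\rho_N)$ the harvested power is $Q = P\,\mv{h}^H\mv{\Lambda}_\rho\mv{h} = P\sum_i \rho_i |h_i|^2$, while the information rate is $R \le \log|\mv{I}+P\,\bar{\mv{\Lambda}}_\rho^{1/2}\mv{h}\mv{h}^H\bar{\mv{\Lambda}}_\rho^{1/2}| = \log(1 + P\,\mv{h}^H\bar{\mv{\Lambda}}_\rho\mv{h}) = \log(1 + P\sum_i (1-\rho_i)|h_i|^2)$, where I have used the identity $|\mv{I}+\mv{a}\mv{a}^H| = 1+\|\mv{a}\|^2$ for a rank-one update.

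Next I would combine these two expressions. Writing $Q = P\sum_i \rho_i|h_i|^2$ and noting $\sum_i(1-\rho_i)|h_i|^2 = \|\mv{h}\|^2 - \sum_i\rho_i|h_i|^2 = \|\mv{h}\|^2 - Q/P$, we get $R \le \log(1 + P\|\mv{h}\|^2 - Q)$. Since $0\le\rho_i\le 1$, the quantity $\sum_i\rho_i|h_i|^2$ ranges over the full interval $[0,\|\mv{h}\|^2]$, so $Q$ ranges over $[0,P\|\mv{h}\|^2]$. This already gives $\mathcal{C}_{\rm R-E}^{\rm PS}(P) \subseteq \{(R,Q): R\le \log(1+(P\|\mv{h}\|^2-Q)),\ 0\le Q\le P\|\mv{h}\|^2\}$, and conversely any target $Q$ in this range is realized by some choice of $\rho_i$'s with the claimed $R$ achievable, so equality holds for $\mathcal{C}_{\rm R-E}^{\rm PS}(P)$. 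The key point is that in the SIMO case the rate depends on the splitting vector only through the scalar aggregate $\sum_i(1-\rho_i)|h_i|^2$, which is exactly $\|\mv{h}\|^2$ minus the aggregate controlling $Q$; there is no loss in collapsing the $N$ degrees of freedom.

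Finally I would argue $\mathcal{C}_{\rm R-E}^{\rm UPS}(P) = \mathcal{C}_{\rm R-E}^{\rm PS}(P)$. Since UPS is a special case of PS, $\subseteq$ is immediate. For $\supseteq$: given any PS splitting achieving a pair $(R,Q)$, set $\rho = \left(\sum_i\rho_i|h_i|^2\right)/\|\mv{h}\|^2 \in[0,1]$, i.e.\ the uniform split with the same total EH fraction. Then uniform splitting gives harvested power $P\rho\|\mv{h}\|^2 = P\sum_i\rho_i|h_i|^2 = Q$ and rate $\log(1+P(1-\rho)\|\mv{h}\|^2) = \log(1+(P\|\mv{h}\|^2-Q))$, which by the displayed characterization is the best possible rate for that $Q$ — hence at least as good as the original PS point. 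This proves the reverse inclusion and completes the proof. I expect the main (though modest) obstacle to be stating cleanly why one may restrict to $\mathtt{tr}(\mv{S})=P$ and why the boundary of each region is governed entirely by the scalar $\sum_i\rho_i|h_i|^2$; once that reduction is made, everything is the rank-one determinant identity plus elementary interval arithmetic.
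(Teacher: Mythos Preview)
Your proposal is correct and follows essentially the same approach as the paper: reduce to a scalar transmit covariance $\mv{S}=P$, compute $Q=P\sum_i\rho_i|h_i|^2$ and $R=\log(1+P\sum_i(1-\rho_i)|h_i|^2)=\log(1+P\|\mv{h}\|^2-Q)$ via the rank-one determinant identity (the paper phrases this as MRC at the ID receiver), and then observe that both PS and UPS sweep the full range $0\le Q\le P\|\mv{h}\|^2$. Your explicit construction of the equivalent uniform split $\rho=\big(\sum_i\rho_i|h_i|^2\big)/\|\mv{h}\|^2$ is a slightly more detailed way of saying what the paper states in one line, but the argument is the same.
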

\begin{proof}
See Appendix \ref{appendix:SIMO}.
\end{proof}

\subsection{Performance Comparison}

The following proposition summarizes the performance comparison
between the TS and UPS schemes.
\begin{proposition}\label{proposition:TS UPS compare}
For the co-located EH and ID receivers, with any $P>0$,
$\mathcal{C}_{\rm R-E}^{\rm TS_1}(P)\subseteq\mathcal{C}_{\rm
R-E}^{\rm UPS}(P)\subseteq \mathcal{C}_{\rm R-E}^{\rm TS_2}(P)$,
while $\mathcal{C}_{\rm R-E}^{\rm UPS}(P)=\mathcal{C}_{\rm R-E}^{\rm
TS_2}(P)$ iff $P\leq (1/h_2-1/h_1)$.
\end{proposition}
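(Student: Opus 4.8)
The plan is to establish the two inclusions separately and then pin down the equality condition. For the first inclusion $\mathcal{C}_{\rm R-E}^{\rm TS_1}(P)\subseteq\mathcal{C}_{\rm R-E}^{\rm UPS}(P)$, I would take an arbitrary boundary point of $\mathcal{C}_{\rm R-E}^{\rm TS_1}(P)$, which by the discussion following \eqref{eq:RE region time switching 1} lies on the straight line joining $(R_{\max},0)$ and $(0,Q_{\max})$ and is generated by some time-split $\alpha$ with transmit covariances $\mv{S}_1=\mv{S}_{\rm ID}$ and $\mv{S}_2=\mv{S}_{\rm EH}=P\mv{v}_1\mv{v}_1^H$. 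I would then exhibit a uniform power-splitting ratio $\rho$ together with a transmit covariance $\mv{S}$ achieving at least the same $(R,Q)$ pair. The natural candidate is $\rho=\alpha$ combined with a suitable power reallocation: with UPS, $Q=\rho\,\mathtt{tr}(\mv{H}\mv{S}\mv{H}^H)$ and $R=\log|\mv{I}+(1-\rho)\mv{H}\mv{S}\mv{H}^H|$, so choosing $\mv{S}$ proportional to the energy-beamforming direction for the energy coordinate and invoking concavity of $\log|\cdot|$ to beat the time-sharing rate should close the argument; essentially this is the same phenomenon already noted in Remark \ref{remark:power rate relationship} and in Problem (P3), namely that simultaneous transmission dominates time-sharing.

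For the second inclusion $\mathcal{C}_{\rm R-E}^{\rm UPS}(P)\subseteq\mathcal{C}_{\rm R-E}^{\rm TS_2}(P)$, I would start from an arbitrary point of $\mathcal{R}^{\rm UPS}(P,\rho)$ with covariance $\mv{S}$, so $R\le\log|\mv{I}+(1-\rho)\mv{H}\mv{S}\mv{H}^H|$ and $Q\le\rho\,\mathtt{tr}(\mv{H}\mv{S}\mv{H}^H)$, and show this pair lies in the simplified region \eqref{eq:RE region time switching 2 new}, i.e. that there is $\mv{S}_1\succeq0$ with $R\le\log|\mv{I}+\mv{H}\mv{S}_1\mv{H}^H|$ and $\mathtt{tr}(\mv{S}_1)\le P-Q/h_1$. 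The idea is to take $\mv{S}_1=(1-\rho)\mv{S}$, which immediately gives the rate inequality; for the power budget I need $(1-\rho)\mathtt{tr}(\mv{S})\le P-Q/h_1$, i.e. $Q\le h_1(P-(1-\rho)\mathtt{tr}(\mv{S}))=h_1(P-\mathtt{tr}(\mv{S}_1))$. Since $Q\le\rho\,\mathtt{tr}(\mv{H}\mv{S}\mv{H}^H)\le\rho h_1\mathtt{tr}(\mv{S})$ (using $\mv{H}\mv{S}\mv{H}^H\preceq h_1\mathtt{tr}(\mv{S})\mv{I}$ in the trace), it suffices that $\rho h_1\mathtt{tr}(\mv{S})\le h_1(P-(1-\rho)\mathtt{tr}(\mv{S}))$, which simplifies to $\mathtt{tr}(\mv{S})\le P$ — true by the power constraint. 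So the second inclusion follows from a direct substitution plus the trace bound on the channel Gram matrix.

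For the equality $\mathcal{C}_{\rm R-E}^{\rm UPS}(P)=\mathcal{C}_{\rm R-E}^{\rm TS_2}(P)$ iff $P\le 1/h_2-1/h_1$, I would reason as follows. The region $\mathcal{C}_{\rm R-E}^{\rm TS_2}(P)$ from \eqref{eq:RE region time switching 2 new} is, for each target $Q$, achieved by water-filling $\mv{S}_1$ over the eigenmodes of $\mv{H}$ with a reduced budget $P-Q/h_1$; when $P-Q/h_1\le 1/h_2-1/h_1$, standard water-filling puts all power on the strongest eigenmode, so the TS$_2$ boundary is generated by beamforming along $\mv{v}_1$. In that beamforming regime, a UPS covariance $\mv{S}=(P-Q/(\rho h_1))_{+}\mv{v}_1\mv{v}_1^H$ with $\rho$ chosen to meet the energy target reproduces exactly the same $(R,Q)$ pair — a computation I would carry out to verify the rate matches $\log(1+h_1(P-Q/h_1))$. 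Conversely, when $P>1/h_2-1/h_1$, the TS$_2$-optimal $\mv{S}_1$ near $Q=0$ spreads power over (at least) two eigenmodes, whereas any UPS point with $R$ close to $R_{\max}$ inherits the scaling $(1-\rho)$ on \emph{all} eigenvalues of $\mv{H}\mv{S}\mv{H}^H$, incurring a strict loss relative to the TS$_2$ scheme which scales only the power budget and not the post-SVD channel gains; making this strictness precise — showing the UPS rate at a given $Q<Q_{\max}$ is strictly below the TS$_2$ rate — is the main obstacle. I would handle it by comparing the two optimization problems at a fixed small $Q$: for TS$_2$ the inner problem is water-filling over $\{h_i\}$ with budget $P-Q/h_1$, while for UPS the best achievable rate at energy $Q$ is $\max_{\rho,\mv{S}}\log|\mv{I}+(1-\rho)\mv{H}\mv{S}\mv{H}^H|$ subject to $\rho\,\mathtt{tr}(\mv{H}\mv{S}\mv{H}^H)\ge Q$, and I would argue via a Lagrangian/KKT comparison (or by directly substituting the TS$_2$ optimizer into the UPS problem and showing the constraint forces $\rho>0$, hence a strict rate drop when more than one eigenmode is active) that the UPS optimum is strictly smaller exactly when $P>1/h_2-1/h_1$. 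This last converse direction, and in particular verifying the threshold is \emph{exactly} $1/h_2-1/h_1$ rather than merely sufficient, is where the careful work lies.
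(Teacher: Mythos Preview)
Your second inclusion $\mathcal{C}_{\rm R-E}^{\rm UPS}(P)\subseteq\mathcal{C}_{\rm R-E}^{\rm TS_2}(P)$ via the substitution $\mv{S}_1=(1-\rho)\mv{S}$ together with the bound $\mathtt{tr}(\mv{H}\mv{S}\mv{H}^H)\le h_1\mathtt{tr}(\mv{S})$ is exactly the paper's argument, and your ``if'' direction (beamforming along $\mv{v}_1$ when the water-filling budget forces rank one) is also the same in spirit, though the paper simply takes $\mv{S}=P\mv{v}_1\mv{v}_1^H$ and $\rho=Q/(h_1P)$ rather than your more complicated $\mv{S}=(P-Q/(\rho h_1))_+\mv{v}_1\mv{v}_1^H$.

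For the first inclusion the paper does not construct a dominating UPS point as you propose; instead it observes that the boundary of $\mathcal{C}_{\rm R-E}^{\rm TS_1}(P)$ is just the segment joining $(0,Q_{\max})$ and $(R_{\max},0)$, that both endpoints lie in $\mathcal{C}_{\rm R-E}^{\rm UPS}(P)$ (take $\rho=1$ and $\rho=0$), and that $\mathcal{C}_{\rm R-E}^{\rm UPS}(P)$ is convex. Your explicit-construction route with $\rho=\alpha$ is not wrong in spirit, but as written it is a handwave: you never specify a single covariance $\mv{S}$ for UPS, and the phrase ``concavity of $\log|\cdot|$ to beat the time-sharing rate'' is precisely what needs to be proved, not invoked. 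The convexity shortcut dispenses with all of this.

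The real gap is in your ``only if'' direction. You correctly identify that for $P>1/h_2-1/h_1$ and small $Q$ the TS$_2$-optimal $\mv{S}_1$ has rank at least two, but you then retreat to a vague KKT comparison. The paper closes this with a clean rank contradiction that you are missing: from your own second-inclusion argument, $R_{\rm UPS}=R_{\rm TS}$ would force both $\mv{S}'=(1-\rho)\mv{S}=\mv{S}_1$ (hence rank $\ge 2$) and $\mathtt{tr}(\mv{S}')=P-Q/h_1$, i.e.\ $(1-\rho)P=P-Q/h_1$, i.e.\ $\rho=Q/(h_1P)$. But the energy constraint $\rho\,\mathtt{tr}(\mv{H}\mv{S}\mv{H}^H)\ge Q$ yields $\rho\ge Q/(h_1P)$ with equality \emph{only} when $\mathtt{tr}(\mv{H}\mv{S}\mv{H}^H)=h_1\mathtt{tr}(\mv{S})=h_1P$, which forces $\mv{S}$ (hence $\mv{S}'$) to be rank-one beamforming along $\mv{v}_1$. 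The two requirements on the rank of $\mv{S}'$ are incompatible, so $R_{\rm UPS}<R_{\rm TS}$ strictly. This two-line argument replaces the Lagrangian analysis you were contemplating and pinpoints the threshold $1/h_2-1/h_1$ exactly.
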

\begin{proof}
See Appendix \ref{appendix:TS UPS compare}.
\end{proof}
From the above proposition, it follows that the TS scheme with the
fixed power constraint performs worse than the UPS scheme in terms
of achievable rate-energy pairs. However, the UPS scheme in general
performs worse than the TS scheme under the flexible power
constraint (without any peak power constraint), while they perform
identically iff the condition $P\leq (1/h_2-1/h_1)$ is satisfied.
This may occur when, e.g., $P$ is sufficiently small (unlikely in
our model since high SNR is of interest), or $h_2=0$ (i.e., $\mv{H}$
is MISO or SIMO). Note that the performance comparison between the
TS scheme (with the flexible power constraint) and the PS scheme
with arbitrary power splitting (instead of UPS) remains unknown
theoretically.

Next, for the purpose of exposition, we compare the rate-energy
tradeoff for the case of co-located EH and ID receivers for a
symmetric MIMO channel $\mv{G}=\mv{H}=[1, ~\theta; \theta, ~ 1]$
with $\theta=0.5$ for Fig. \ref{fig:RE region co-located receiver 1}
and $\theta=0.8$ for Fig. \ref{fig:RE region co-located receiver 2},
respectively. It is assumed that $P=100$. The R-E region outer bound
is obtained as $\mathcal{C}_{\rm R-E}(P)$ with $\mv{G}=\mv{H}$
according to Corollary \ref{corollary:optimal S new}. The two
achievable R-E regions for the TS scheme with fixed vs. flexible
power constraints are shown for comparison, and it is observed that
$\mathcal{C}_{\rm R-E}^{\rm TS_1}(P)\subseteq\mathcal{C}_{\rm
R-E}^{\rm TS_2}(P)$. The achievable R-E region for the TS scheme
with the flexible power constraint $P$ as well as the peak power
constraint $P_{\rm peak}=2P$ is also shown, which is observed to lie
between $\mathcal{C}_{\rm R-E}^{\rm TS_1}(P)$ and $\mathcal{C}_{\rm
R-E}^{\rm TS_2}(P)$. Moreover, the achievable R-E region
$\mathcal{C}_{\rm R-E}^{\rm UPS}(P)$ for the UPS scheme is shown,
whose boundary points constitute those of $\mathcal{R}^{\rm
UPS}(P,\rho)$'s with different $\rho$'s from 0 to 1. It is observed
that $\mathcal{C}_{\rm R-E}^{\rm TS_1}(P)\subseteq\mathcal{C}_{\rm
R-E}^{\rm UPS}(P)\subseteq \mathcal{C}_{\rm R-E}^{\rm TS_2}(P)$,
which is in accordance with Proposition \ref{proposition:TS UPS
compare}. Note that for this channel, the R-E region
$\mathcal{C}_{\rm R-E}^{\rm PS}(P)$ for the general PS scheme
defined in (\ref{eq:RE region power splitting}) only provides
negligible rate-energy gains over $\mathcal{C}_{\rm R-E}^{\rm
UPS}(P)$ by the UPS scheme, and is thus not shown here. In addition,
the achievable R-E region $\mathcal{R}^{\rm AS}(P,\Omega)$ for the
AS scheme is shown, which is the same for $\Omega=\{1\}$ or $\{2\}$
due to the symmetric channel setup. Furthermore, by comparing Figs.
\ref{fig:RE region co-located receiver 1} and \ref{fig:RE region
co-located receiver 2}, it is observed that the performance gap
between $\mathcal{C}_{\rm R-E}^{\rm UPS}(P)$ and $\mathcal{C}_{\rm
R-E}^{\rm TS_2}(P)$ is reduced when $\theta$ increases from $0.5$ to
$0.8$. This is because for this channel setup, $h_1=(1+\theta)^2$
and $h_2=(1-\theta)^2$, and as a result, the condition $P\leq
(1/h_2-1/h_1)$  in Proposition \ref{proposition:TS UPS compare} for
$\mathcal{C}_{\rm R-E}^{\rm UPS}(P)=\mathcal{C}_{\rm R-E}^{\rm
TS_2}(P)$ will hold when $\theta\rightarrow 1$ for any $P>0$.
Finally, it is worth pointing out that in practical SWIPT systems
with the co-located EH/ID receiver, under the high-SNR condition,
the receiver typically operates at the ``high-energy'' regime in the
achievable rate-energy regions shown in Figs. \ref{fig:RE region
co-located receiver 1} and \ref{fig:RE region co-located receiver
2}, which corresponds to applying very large values of the
time-switching coefficient $\alpha$ or the power-splitting
coefficient $\rho$, i.e, $\alpha\rightarrow 1$ and $\rho\rightarrow
1$.

\begin{figure}
\centering{
 \epsfxsize=5in
    \leavevmode{\epsfbox{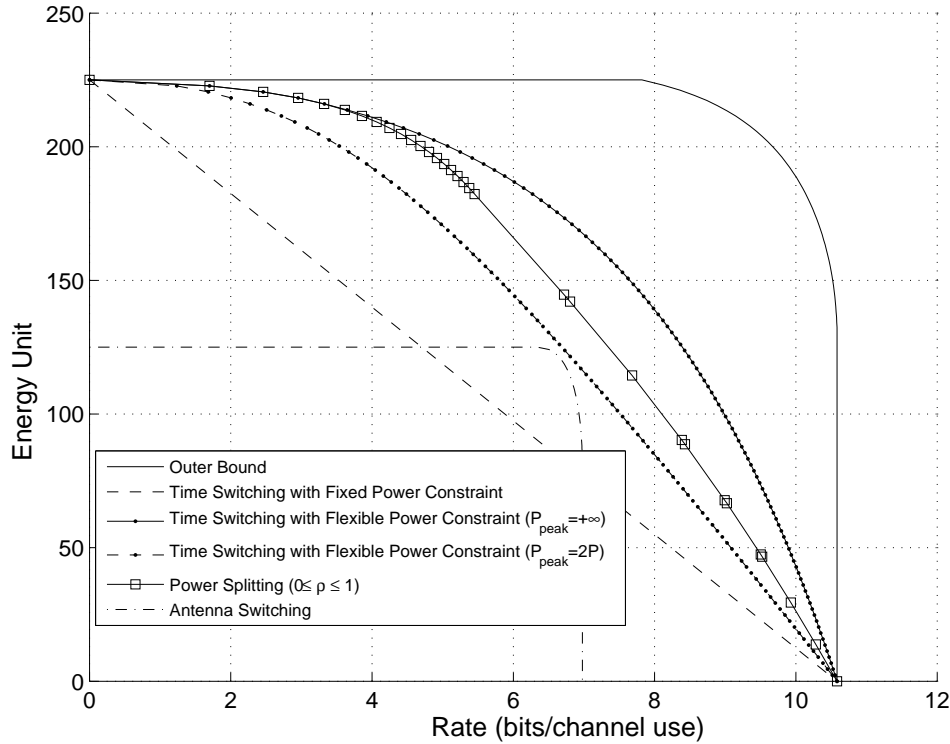}}}
\caption{Rate-energy tradeoff for a $2\times 2$ MIMO broadcast
system with co-located EH and ID receivers, and $\mv{H}=[1~ 0.5;
0.5~ 1]$.}\label{fig:RE region co-located receiver 1}
\end{figure}

\begin{figure}
\centering{
 \epsfxsize=5in
    \leavevmode{\epsfbox{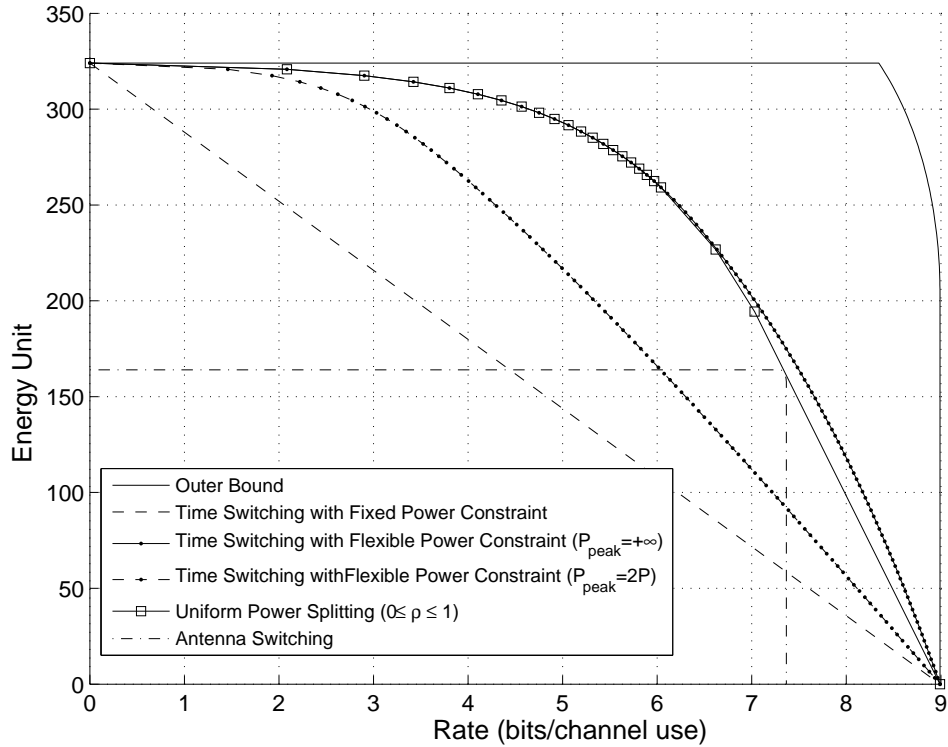}}}
\caption{Rate-energy tradeoff for a $2\times 2$ MIMO broadcast
system with co-located EH and ID receivers, and $\mv{H}=[1~ 0.8;
0.8~ 1]$.}\label{fig:RE region co-located receiver 2}
\end{figure}

\section{Concluding Remarks} \label{sec:conclusion}

This paper investigated the performance limits of emerging
``wireless-powered'' communication networks by means of
opportunistic energy harvesting from ambient radio signals or
dedicated wireless power transfer. Under a simplified three-node
setup, our study revealed some fundamental tradeoffs in designing
wireless MIMO systems for maximizing the efficiency of simultaneous
information and energy transmission. Due to the space limitation,
there are serval important issues unaddressed in this paper and left
for our future work, some of which are highlighted as follows:

\begin{itemize}
\item It will be interesting to extend the rate-energy region characterization
to more general MIMO broadcast systems with more than two receivers.
Depending on whether the energy and information receivers are
separated or co-located, and the broadcast information is for
unicasting or multicasting, various new problems can be formulated
for which the optimal solutions are challenging to obtain.

\item For the case of co-located energy and information
receivers, this paper shows a performance bound that in general
cannot be achieved by practical receivers. Although this paper has
shed some light on practical hardware designs to approach this limit
(e.g., by the power splitting scheme), further research endeavor is
still required to further reduce or close this gap, even for the
SISO AWGN channel.

\item In this paper, to simplify the analysis, it is assumed that the energy conversion efficiency at the energy receiver is
independent of the instantaneous amplitude of the received radio
signal, which is in general not true for practical RF energy
harvesting circuits \cite{Le}. Thus, how to design the broadcast
signal waveform, namely {\it energy modulation}, to maximize the
efficiency of energy transfer to multiple receivers under practical
energy conversion constraints is an open problem of high practical
interests.

\item Unlike the traditional view that
the receiver noise and/or co-channel interference degrade the
communication link reliability \cite{Jafar}, they are however
beneficial from the viewpoint of RF energy harvesting. Thus, there
exist nontrivial tradeoffs in allocating communication resources to
optimize the network interference levels for achieving maximal
information vs. energy transfer. More studies to reveal such
tradeoffs are worth pursuing.
\end{itemize}

\appendices

\section{Proof of Proposition \ref{proposition:opt S EH}} \label{appendix:proof opt S EH}

Without loss of generality, we can write the optimal solution to
Problem (P1) in its eigenvalue decomposition form as $\mv{S}_{\rm
EH}=\mv{V}\mv{\Sigma}\mv{V}^H$, where $\mv{V}\in\mathbb{C}^{M\times
M}$, $\mv{V}\mv{V}^H=\mv{V}^H\mv{V}=\mv{I}$, and
$\mv{\Sigma}=\mathtt{diag}(p_1,\ldots,p_M)$ with $p_1\geq p_2\geq
\ldots \geq p_M\geq 0$ and $\sum_{i=1}^Mp_i\leq P$. Let
$\hat{\mv{G}}=\mv{G}\mv{V}=[\hat{\mv{g}}_1, \ldots,
\hat{\mv{g}}_M]$. Then, the objective function of Problem (P1) can
be written as
$Q=\mathtt{tr}(\mv{G}\mv{S}\mv{G}^H)=\mathtt{tr}(\hat{\mv{G}}\mv{\Sigma}\hat{\mv{G}}^H)=\sum_{i=1}^Mp_i\|\hat{\mv{g}}_i\|^2\leq
P\|\hat{\mv{g}}_1\|^2$, where the equality holds if
$\|\hat{\mv{g}}_1\|^2=\max_i\|\hat{\mv{g}}_i\|^2$ and $p_1=P, p_i=0,
i=2,\ldots,M$. Let $\mv{V}=[\mv{v}_1,\ldots,\mv{v}_M]$. Since the
(reduced) SVD of $\mv{G}$ is given by
$\mv{G}=\mv{U}_G\mv{\Gamma}_G^{1/2}\mv{V}_G^H$ in Section
\ref{sec:system model}, we infer that $\|\hat{\mv{g}}_1\|^2$ is the
maximum of all $\|\hat{\mv{g}}_i\|^2$'s if and only if $\mv{v}_1$ is
the first column of $\mv{V}_G$ corresponding to the largest singular
value of $\mv{G}$, which is $\sqrt{g_1}$. Hence, we obtain the
optimal solution of Problem (P1) as $\mv{S}_{\rm
EH}=P\mv{v}_1\mv{v}_1^H$. The proof of Proposition
\ref{proposition:opt S EH} is thus completed.

\section{Proof of Theorem \ref{theorem:optimal S}} \label{appendix:proof optimal S}

The Lagrangian of (P3) can be written as
\begin{align}\label{eq:Lagrangian}
L(\mv{S},\lambda,\mu)=
\log\left|\mv{I}+\mv{H}\mv{S}\mv{H}^H\right|+\lambda\left(\mathtt{tr}\left(\mv{G}\mv{S}\mv{G}^H\right)-
 \bar{Q}\right) -\mu\left(\mathtt{tr}(\mv{S})-P \right).
\end{align}
Then, the Lagrange dual function of (P3) is defined as
$g(\lambda,\mu)=\max_{\smv{S}\succeq 0} L(\mv{S},\lambda,\mu)$, and
the dual problem of (P3), denoted as (P3-D), is defined as
$\min_{\lambda\geq 0, \mu\geq 0} ~g(\lambda,\mu)$. Since (P3) can be
solved equivalently by solving (P3-D), in the following, we first
maximize the Lagrangian to obtain the dual function with fixed
$\lambda\geq 0$ and $\mu\geq 0$, and then find the optimal dual
solutions $\lambda^*$ and $\mu^*$ to minimize the dual function. The
transmit covariance $\mv{S}^*$ that maximizes the Lagrangian to
obtain $g(\lambda^*,\mu^*)$ is thus the optimal primal solution of
(P3).

Consider first the problem of maximizing the Lagrangian over
$\mv{S}$ with fixed $\lambda$ and $\mu$. By discarding the constant
terms associated with $\lambda$ and $\mu$ in (\ref{eq:Lagrangian}),
this problem can be equivalently rewritten as
\begin{align}\label{eq:subproblem P3}
~\max_{\smv{S}\succeq 0} &
\log\left|\mv{I}+\mv{H}\mv{S}\mv{H}^H\right|-\mathtt{tr}\left(\left(\mu\mv{I}-\lambda\mv{G}^H\mv{G}\right)\mv{S}\right).
\end{align}

Recall that $g_1$ is the largest eigenvalue of the matrix
$\mv{G}^H\mv{G}$. We then have the following lemma.

\begin{lemma}\label{lemma:dual variable relationship}
For the problem in (\ref{eq:subproblem P3}) to have a bounded
optimal value, $\mu>\lambda g_1$ must hold.
\end{lemma}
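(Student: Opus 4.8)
The plan is to establish the contrapositive: if $\mu\le\lambda g_1$, then the maximization in (\ref{eq:subproblem P3}) has value $+\infty$. Note that the only remaining constraint in (\ref{eq:subproblem P3}) is $\mv{S}\succeq\mv{0}$, the transmit power budget having been absorbed into the Lagrangian, so it suffices to produce one family $\{\mv{S}_t\}_{t>0}$ of positive semidefinite matrices along which the objective diverges.

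First I would pick the rank-one ``energy-beamforming'' direction $\mv{S}_t=t\,\mv{v}_1\mv{v}_1^H$, where $\mv{v}_1$ is the unit-norm eigenvector of $\mv{G}^H\mv{G}$ associated with its largest eigenvalue $g_1$ (equivalently the top right singular vector of $\mv{G}$, as in Proposition \ref{proposition:opt S EH}). Then $\mathtt{tr}\big((\mu\mv{I}-\lambda\mv{G}^H\mv{G})\mv{S}_t\big)=t\,\mv{v}_1^H(\mu\mv{I}-\lambda\mv{G}^H\mv{G})\mv{v}_1=t(\mu-\lambda g_1)$, so the penalty term equals $t(\lambda g_1-\mu)\ge0$, while $\log\big|\mv{I}+\mv{H}\mv{S}_t\mv{H}^H\big|=\log\big(1+t\|\mv{H}\mv{v}_1\|^2\big)\ge0$; hence the objective evaluated at $\mv{S}_t$ is at least $t(\lambda g_1-\mu)+\log\big(1+t\|\mv{H}\mv{v}_1\|^2\big)$. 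When $\mu<\lambda g_1$ the linear term alone forces this to $+\infty$ as $t\to\infty$, which already contradicts boundedness.

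The delicate point --- and the only real obstacle --- is the boundary case $\mu=\lambda g_1$, where the linear term vanishes and divergence must come from the logarithmic term $\log\big(1+t\|\mv{H}\mv{v}_1\|^2\big)$. This still tends to $+\infty$ as long as $\mv{H}\mv{v}_1\neq\mv{0}$; and if $g_1$ is a repeated eigenvalue of $\mv{G}^H\mv{G}$ it suffices that $\mv{H}$ not annihilate the \emph{entire} $g_1$-eigenspace, in which case I would simply replace $\mv{v}_1$ by a unit vector in that eigenspace with $\mv{H}\mv{v}_1\neq\mv{0}$. Under the standing assumption that the channels are non-degenerate this condition holds, so boundedness of (\ref{eq:subproblem P3}) indeed forces $\mu>\lambda g_1$ --- equivalently $\mv{A}=\mu\mv{I}-\lambda\mv{G}^H\mv{G}\succ\mv{0}$ --- which is exactly the positivity needed to complete the Lagrangian maximization and arrive at the form of $\mv{S}^*$ in Theorem \ref{theorem:optimal S}.
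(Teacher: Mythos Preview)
Your proof is correct and follows essentially the same route as the paper's: both argue by contradiction using the rank-one energy-beamforming test matrix $\mv{S}_t=t\,\mv{v}_1\mv{v}_1^H$, compute the objective as $\log(1+t\|\mv{H}\mv{v}_1\|^2)+t(\lambda g_1-\mu)$, and let $t\to\infty$. You are in fact slightly more careful than the paper in separating the strict-inequality and equality cases and in handling a possibly repeated top eigenvalue of $\mv{G}^H\mv{G}$; the paper simply asserts $\|\mv{H}\mv{v}_1\|^2>0$ on the grounds that $\mv{H}$ and $\mv{G}$ are either independent (separated receivers) or identical (co-located receivers).
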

\begin{proof}
We prove this lemma by contradiction. Suppose that  $\mu \leq
\lambda g_1$. Then, let $\mv{S}^{\star}=\beta\mv{v}_1\mv{v}_1^H$
with $\beta$ being any positive constant. Substituting
$\mv{S}^{\star}$ into (\ref{eq:subproblem P3}) yields
$\log(1+\beta\|\mv{H}\mv{v}_1\|^2)+\beta(\lambda g_1-\mu)$. Since
$\mv{H}$ and $\mv{G}$ are either independent (in the case of
separated receivers) or identical (in the case of co-located
receivers), it is valid to assume that $\|\mv{H}\mv{v}_1\|^2>0$ and
thus the value of the above function or the optimal value of Problem
(\ref{eq:subproblem P3}) becomes unbounded when $\beta\rightarrow
\infty$. Thus, the presumption that $\mu \leq \lambda g_1$ cannot be
true, which completes the proof.
\end{proof}

Since Problem (P3) should have a bounded optimal value, it follows
from the above lemma that the optimal primal and dual solutions of
(P3) are obtained when $\mu>\lambda g_1$. Let
$\mv{A}=\mu\mv{I}-\lambda\mv{G}^H\mv{G}$. It then follows that
$\mv{A}\succ 0$ with $\mu>\lambda g_1$, and thus $\mv{A}^{-1}$
exists. The problem in (\ref{eq:subproblem P3}) is then rewritten as
\begin{align}\label{eq:subproblem P3 new}
~\max_{\smv{S}\succeq 0} &
\log\left|\mv{I}+\mv{H}\mv{S}\mv{H}^H\right|-\mathtt{tr}\left(\mv{AS}\right).
\end{align}
Let the (reduced) SVD of the matrix $\mv{H}\mv{A}^{-1/2}$ be given
by
$\mv{H}\mv{A}^{-1/2}=\tilde{\mv{U}}\tilde{\mv{\Gamma}}^{1/2}\tilde{\mv{V}}^H$,
where $\tilde{\mv{U}}\in\mathbb{C}^{M\times T_2}$,
$\tilde{\mv{V}}\in\mathbb{C}^{M\times T_2}$,
$\tilde{\mv{\Gamma}}=\mathtt{diag}(\tilde{h}_1,\ldots,\tilde{h}_{T_2})$,
with $\tilde{h}_1\geq \tilde{h}_2\geq \ldots \geq
\tilde{h}_{T_2}\geq 0$. It has been shown in \cite{ZhangCR} under
the CR setup that the optimal solution to Problem
(\ref{eq:subproblem P3 new}) with arbitrary $\mv{A}\succ 0$ has the
following form:
\begin{align}\label{eq:optimal S appendix}
\mv{S}^{\star}=\mv{A}^{-1/2}\tilde{\mv{V}}\tilde{\mv{\Lambda}}\tilde{\mv{V}}^H\mv{A}^{-1/2}
\end{align}
where
$\tilde{\mv{\Lambda}}=\mathtt{diag}(\tilde{p}_1,\ldots,\tilde{p}_{T_2})$,
with $\tilde{p}_i=(1-1/\tilde{h}_i)^+, i=1,\ldots,T_2$.

Next, we address how to solve the dual problem (P3-D) by minimizing
the dual function $g(\lambda,\mu)$ subject to $\lambda\geq 0$,
$\mu\geq 0$, and the new constraint $\mu>\lambda g_1$. This can be
done by applying the subgradient-based method, e.g., the ellipsoid
method \cite{Boydnotes}, for which it can be shown (the proof is
omitted for brevity) that the subgradient of $g(\lambda,\mu)$ at
point $[\lambda,\mu]$ is given by $[
\mathtt{tr}(\mv{G}\mv{S}^{\star}\mv{G}^H)-\bar{Q},P-\mathtt{tr}(\mv{S}^{\star})]$,
where $\mv{S}^{\star}$ is given in (\ref{eq:optimal S appendix}),
which is the optimal solution of Problem (\ref{eq:subproblem P3})
for a given pair of $\lambda$ and $\mu$. When the optimal dual
solutions $\lambda^*$ and $\mu^*$ are obtained by the ellipsoid
method, the corresponding optimal solution $\mv{S}^{\star}$ for
Problem (\ref{eq:subproblem P3}) converges to the primal optimal
solution to Problem (P3), denoted by $\mv{S}^*$. The above
procedures for solving (P3) are summarized in Table \ref{table}. The
proof of Theorem \ref{theorem:optimal S} is thus completed.

\begin{table}
\centering \caption{Algorithm for Solving Problem (P3).}
\label{table}
\begin{tabular}{|l|}
\hline \hspace*{0.0cm} Initialize $\lambda\geq 0, \mu\geq 0,
\mu>\lambda g_1$
\\ \hspace*{0.0cm} Repeat \\
\hspace*{0.25cm} Compute $\mv{S}^{\star}$ using
(\ref{eq:optimal S appendix}) with the given $\lambda$ and $\mu$ \\
\hspace*{0.25cm}  Compute the subgradient of $g(\lambda,\mu)$  \\
\hspace*{0.25cm} Update $\lambda$ and $\mu$ using the ellipsoid
method subject to $\mu>\lambda g_1\geq 0$ \\
\hspace*{0.0cm} Until $\lambda$ and $\mu$ converge to the prescribed accuracy \\
\hspace*{0.0cm} Set $\mv{S}^*=\mv{S}^{\star}$ \\
\hline
\end{tabular}
\end{table}

\section{Proof of Corollary \ref{corollary:optimal S MISO}} \label{appendix:proof MISO optimal S}

Since $\mv{H}\equiv\mv{h}^H$, in Theorem \ref{theorem:optimal S},
the (reduced) SVD of $\mv{h}^H\mv{A}^{-1/2}$ with
$\mv{A}=\mu^*\mv{I}-\lambda^*\mv{G}^H\mv{G}$ simplifies to
$\mv{h}^H\mv{A}^{-1/2}=1\times\sqrt{\tilde{h}_1}\times\tilde{\mv{v}}_1^H$,
where $\tilde{h}_1=\|\mv{A}^{-1/2}\mv{h}\|^2$ and
$\tilde{\mv{v}}_1=\mv{A}^{-1/2}\mv{h}/\|\mv{A}^{-1/2}\mv{h}\|$.
Thus, from (\ref{eq:optimal S}) we have

\begin{align}
\mv{S}^*&=\mv{A}^{-1/2}\tilde{\mv{v}}_1\tilde{p}_1\tilde{\mv{v}}_1^H\mv{A}^{-1/2}
\\&=\frac{\mv{A}^{-1/2}\mv{A}^{-1/2}\mv{h}}{\|\mv{A}^{-1/2}\mv{h}\|}\left(1-\frac{1}{\|\mv{A}^{-1/2}\mv{h}\|^2}\right)^+\frac{\mv{h}^H\mv{A}^{-1/2}\mv{A}^{-1/2}}{\|\mv{A}^{-1/2}\mv{h}\|}
\\&=\mv{A}^{-1}\mv{h}\left(\frac{1}{\|\mv{A}^{-1/2}\mv{h}\|^2}-\frac{1}{\|\mv{A}^{-1/2}\mv{h}\|^4}\right)^+\mv{h}^H\mv{A}^{-1}.
\label{eq:optimal S MISO temp}
\end{align}
Moreover, since $T_2=1$ in this case, the maximum achievable rate is
given by
\begin{align}
R^*=\sum_{i=1}^{T_2}\log(1+\tilde{h}_i\tilde{p}_i)=\sum_{i=1}^{T_2}\left(\log(\tilde{h}_i)\right)^+=
\left(2\log\left(\|\mv{A}^{-1/2}\mv{h}\|\right)\right)^+.
\label{eq:optimal S MISO rate}
\end{align}
From (\ref{eq:optimal S MISO temp}) and (\ref{eq:optimal S MISO
rate}), Corollary \ref{corollary:optimal S MISO} thus follows.

\section{Proof of Corollary \ref{corollary:optimal S new}} \label{appendix:proof optimal S co-located
receivers}

Since $\mv{G}=\mv{H}$, from Theorem \ref{theorem:optimal S}, we have
$\mv{A}=\mu^*\mv{I}-\lambda^*\mv{G}^H\mv{G}=\mu^*\mv{I}-\lambda^*\mv{H}^H\mv{H}\succ
0$ (i.e., $\mu^*>\lambda^*h_1$). Recall that the (reduced) SVD of
$\mv{H}$ is given by $\mv{H}=\mv{U}_H\mv{\Gamma}_H^{1/2}\mv{V}_H^H$,
with $\mv{\Gamma}_H=\mathtt{ diag}(h_1,\ldots,h_{T_2})$, $h_1\geq
h_2\geq \ldots \geq h_{T_2}\geq 0$. Thus, it follows that
$\mv{A}=\mu^*\mv{I}-\lambda^*\mv{H}^H\mv{H}=\mv{V}_H(\mu^*\mv{I}-\lambda^*\mv{\Gamma}_H)\mv{V}_H^H$,
and
$\mv{A}^{-1/2}=\mv{V}_H(\mu^*\mv{I}-\lambda^*\mv{\Gamma}_H)^{-1/2}\mv{V}_H^H$.
Then, the (reduced) SVD of the matrix $\mv{H}\mv{A}^{-1/2}$ is given
by
$\mv{H}(\mv{V}_H(\mu^*\mv{I}-\lambda^*\mv{\Gamma}_H)\mv{V}_H^H)^{-1/2}=\mv{U}_H\mv{\Gamma}_H^{1/2}(\mu^*\mv{I}-\lambda^*\mv{\Gamma}_H)^{-1/2}\mv{V}_H^H$.
Since in Theorem \ref{theorem:optimal S}, the SVD of
$\mv{H}\mv{A}^{-1/2}$ is denoted by
$\tilde{\mv{U}}\tilde{\mv{\Gamma}}^{1/2}\tilde{\mv{V}}^H$, we thus
obtain $\tilde{\mv{U}}=\mv{U}_H$,
$\tilde{\mv{\Gamma}}=\mv{\Gamma}_H(\mu^*\mv{I}-\lambda^*\mv{\Gamma}_H)^{-1}$,
and $\tilde{\mv{V}}=\mv{V}_H$. From (\ref{eq:optimal S}), it then
follows that
\begin{align}
\mv{S}^*&=\mv{A}^{-1/2}\tilde{\mv{V}}\tilde{\mv{\Lambda}}\tilde{\mv{V}}^H\mv{A}^{-1/2}
\\
&=\mv{V}_H(\mu^*\mv{I}-\lambda^*\mv{\Gamma}_H)^{-1/2}\mv{V}_H^H\mv{V}_H\tilde{\mv{\Lambda}}\mv{V}_H^H\mv{V}_H(\mu^*\mv{I}-\lambda^*\mv{\Gamma}_H)^{-1/2}\mv{V}_H
\\
&=\mv{V}_H(\mu^*\mv{I}-\lambda^*\mv{\Gamma}_H)^{-1}\tilde{\mv{\Lambda}}\mv{V}_H^H
\\ &\triangleq \mv{V}_H\mv{\Sigma}\mv{V}_H^H \label{eq:optimal S temp}
\end{align}
where
$\mv{\Sigma}=(\mu^*\mv{I}-\lambda^*\mv{\Gamma}_H)^{-1}\tilde{\mv{\Lambda}}\triangleq\mathtt{diag}(\hat{p}_1,\ldots,\hat{p}_{T_2})$.
Note that in Theorem \ref{theorem:optimal S},
$\tilde{\mv{\Lambda}}=\mathtt{diag}(\tilde{p}_1,\ldots,\tilde{p}_{T_2})$,
with $\tilde{p}_i=(1-1/\tilde{h}_i)^+, i=1,\ldots,T_2$, and
$\tilde{\mv{\Gamma}}=\mathtt{diag}(\tilde{h}_1,\ldots,\tilde{h}_{T_2})=\mv{\Gamma}_H(\mu^*\mv{I}-\lambda^*\mv{\Gamma}_H)^{-1}$.
Thus, we obtain that
\begin{align}
\hat{p}_i&= \frac{1}{\mu^*-\lambda^* h_i}\left(1-\frac{\mu^*-\lambda^* h_i}{h_i}\right)^+ \\
&=\left(\frac{1}{\mu^*-\lambda^* h_i}-\frac{1}{h_i}\right)^+, \ \
i=1,\ldots,T_2. \label{eq:optimal power temp}
\end{align}
Moreover, it is easy to verify that
$\mv{\Gamma}_H\mv{\Sigma}=\tilde{\mv{\Gamma}}\tilde{\mv{\Lambda}}$.
Since for Problem (P3), the maximum achievable rate is given by
$R^*=\sum_{i=1}^{T_2}\log(1+\tilde{h}_i\tilde{p}_i)$, it follows
that
\begin{align}\label{eq:optimal rate temp}
R^*=\sum_{i=1}^{T_2}\log(1+h_i\hat{p}_i).
\end{align}

With (\ref{eq:optimal S temp}), (\ref{eq:optimal power temp}), and
(\ref{eq:optimal rate temp}), the proof of Corollary
\ref{corollary:optimal S new} is thus completed.

\section{Proof of Proposition \ref{proposition:TS zero alpha}} \label{appendix:proof zero alpha}

Due to orthogonal transmissions for the EH and ID receivers in the
TS scheme, we first show that the minimum transmission energy
consumed to achieve any harvested power $Q<Q_{\max}$ in the EH time
slot is equal to $Q/h_1$ regardless of $\alpha$, as follows: From
Section \ref{sec:system model} (assuming $\mv{G}=\mv{H}$), it
follows that the optimal $\mv{S}_2$ is in the form of
$q\mv{v}_1\mv{v}_1^H$, where $q>0$ and $\mv{v}_1$ is the eigenvector
of the matrix $\mv{H}^H\mv{H}$ corresponding to its largest
eigenvalue denoted by $h_1$. To achieve $Q$, it follows from
(\ref{eq:RE region time switching 2}) that
$\alpha\mathtt{tr}(\mv{H}\mv{S}_2\mv{H}^H)=Q$ and thus
$q=Q/(h_1\alpha)$. Thus, the minimum energy consumed to achieve $Q$
in (\ref{eq:RE region time switching 2}) is given by
$\alpha\mathtt{tr}(\mv{S}_2)=\alpha\cdot q= Q/h_1$, independent of
$\alpha$.

With this result, in (\ref{eq:RE region time switching 2}), the
transmission rate $R$ is given by
$(1-\alpha)\log|\mv{I}+\mv{H}\mv{S}_1\mv{H}^H |$ subject to
$(1-\alpha)\mathtt{tr}(\mv{S}_1)\leq (P-Q/h_1)$. Due to the
concavity of the $\log(\cdot)$ function, it follows that $R$ is
maximized when $\alpha\rightarrow 0$, under which the optimal
solution of $\mv{S}_1$ can be obtained similarly as for Problem
(P2). Thus, by changing the values of $Q$ in the interval of $0< Q <
Q_{\max}$ and solving the above problem with $\alpha=0$, the
corresponding maximum achievable rates as well as  the boundary of
$\mathcal{C}_{\rm R-E}^{\rm TS_2}(P)$ are obtained as given in
(\ref{eq:RE region time switching 2 new}) for the case of flexible
power constraint. Proposition \ref{proposition:TS zero alpha} thus
follows.

\section{Proof of Proposition \ref{proposition:SIMO}} \label{appendix:SIMO}

Since $\mv{H}\equiv\mv{h}\triangleq[h_1,\ldots,h_{N}]^T$, for any
set of $\rho_i$'s with $0\leq \rho_i\leq 1$, the harvested power is
equal to $Q=P\sum_{i=1}^{N}\rho_i|h_i|^2$. Clearly, $0\leq Q\leq
\|\mv{h}||^2P$. The equivalent SIMO channel for decoding information
then becomes
$\tilde{\mv{h}}\triangleq[\sqrt{1-\rho_1}h_1,\ldots,\sqrt{1-\rho_{N}}h_{N}]^T$.
Since for the SIMO channel, the transmit covariance matrix degrades
to a scalar equal to $P$, the maximum achievable rate is given by
(via applying the MRC beamforming at ID receiver):
\begin{align}
R=&\log\left(1+\|\tilde{\mv{h}}\|^2P\right) \\
=&\log\left(1+\sum_{i=1}^{N}(1-\rho_i)|h_i|^2P\right)\\ =&\log\left(1+\sum_{i=1}^{N}|h_i|^2P-\sum_{i=1}^{N}\rho_i|h_i|^2P\right)\\
=&\log\left(1+\|\mv{h}\|^2P-Q\right).
\end{align}
We thus have $\mathcal{C}_{\rm R-E}^{\rm PS}(P)=\{(R,Q): R\leq
\log(1+(\|\mv{h}\|^2P-Q)), 0\leq Q \leq \|\mv{h}\|^2P\}$.
Furthermore, since the above proof is valid for any $\rho_i$'s and
changing $\rho$ from 0 to 1 yields the value of
$Q=P\rho\sum_{i=1}^{N}|h_i|^2$ from 0 to $\|\mv{h}||^2P$, it thus
follows that $\mathcal{C}_{\rm R-E}^{\rm UPS}(P)=\{(R,Q): R\leq
\log(1+(\|\mv{h}\|^2P-Q)), 0\leq Q \leq \|\mv{h}\|^2P\}$, which is
the same as  $\mathcal{C}_{\rm R-E}^{\rm PS}(P)$. The proof of
Proposition \ref{proposition:SIMO} is thus completed.

\section{Proof of Proposition \ref{proposition:TS UPS compare}} \label{appendix:TS UPS compare}

First, we prove the former part of Proposition \ref{proposition:TS
UPS compare}, i.e., for any $P>0$, $\mathcal{C}_{\rm R-E}^{\rm
TS_1}(P)\subseteq\mathcal{C}_{\rm R-E}^{\rm UPS}(P)\subseteq
\mathcal{C}_{\rm R-E}^{\rm TS_2}(P)$. The proof of $\mathcal{C}_{\rm
R-E}^{\rm TS_1}(P)\subseteq\mathcal{C}_{\rm R-E}^{\rm UPS}(P)$ is
trivial, since the boundary of $\mathcal{C}_{\rm R-E}^{\rm TS_1}(P)$
is simply a straight line connecting the two boundary points
$(0,Q_{\max})$ and $(R_{\max},0)$ (cf. Fig. \ref{fig:RE region
co-located receiver 1}), and $\mathcal{C}_{\rm R-E}^{\rm UPS}(P)$ is
a convex set containing these two points. Next, we prove
$\mathcal{C}_{\rm R-E}^{\rm UPS}(P)\subseteq \mathcal{C}_{\rm
R-E}^{\rm TS_2}(P), \forall P\geq 0 $, by showing that for any given
harvested power $0<Q<Q_{\max}$, the corresponding boundary rate for
$\mathcal{C}_{\rm R-E}^{\rm TS_2}(P)$, denoted by $R_{\rm TS}$, is
no smaller than that for $\mathcal{C}_{\rm R-E}^{\rm UPS}(P)$,
denoted by $R_{\rm UPS}$, i.e., $R_{\rm TS}\geq R_{\rm UPS}$, as
follows: For any given $Q$, from the proof of Proposition
\ref{proposition:TS zero alpha}, it follows that  $R_{\rm TS}$ is
obtained (with $\alpha=0$) by maximizing
$\log|\mv{I}+\mv{H}\mv{S}_1\mv{H}^H |$ subject to
$\mathtt{tr}(\mv{S}_1)\leq (P-Q/h_1)$. On the other hand, for the
UPS scheme, from the harvested power constraint
$\rho\mathtt{tr}\left(\mv{H}\mv{S}\mv{H}^H\right)\geq Q$, it follows
that $\rho\geq Q/(h_1P)$ must hold. Note that $R_{\rm UPS}$ is
obtained by maximizing
$\log\left|\mv{I}+(1-\rho)\mv{H}\mv{S}\mv{H}^H\right|$ subject to
$\mathtt{tr}(\mv{S})\leq P$. Let $\mv{S}'=(1-\rho)\mv{S}$. The above
problem then becomes equivalent to maximizing
$\log\left|\mv{I}+\mv{H}\mv{S}'\mv{H}^H\right|$ subject to
$\mathtt{tr}(\mv{S}')\leq (1-\rho)P$. Since $\rho\geq Q/(h_1P)$, it
follows that $\mathtt{tr}(\mv{S}')\leq (1-Q/(h_1P))P=P-Q/h_1$. Thus,
it follows that $R_{\rm TS}\geq R_{\rm UPS}$. The former part of
Proposition \ref{proposition:TS UPS compare} is proved.

Next, we show the latter part of Proposition \ref{proposition:TS UPS
compare}, i.e., $\mathcal{C}_{\rm R-E}^{\rm UPS}(P)=\mathcal{C}_{\rm
R-E}^{\rm TS_2}(P)$ iff $P\leq (1/h_2-1/h_1)$. Consider first the
proof of the ``if'' part. For any $0<Q<Q_{\max}$, since $R_{\rm
TS}=\max_{\smv{S}_1}\log|\mv{I}+\mv{H}\mv{S}_1\mv{H}^H |$ subject to
$\mathtt{tr}(\mv{S}_1)\leq (P-Q/h_1)<(1/h_2-1/h_1)$, the optimal
solution for this problem must be beamforming, i.e.,
$\mv{S}_1=(P-Q/h_1)\mv{v}_1\mv{v}_1^H$ with $\mv{v}_1$ being the
eigenvector of $\mv{H}^H\mv{H}$ corresponding to its largest
eigenvalue $h_1$, due to the WF power application given by
(\ref{eq:WF}). Thus, it follows that $R_{\rm TS}=\log(1+h_1P-Q)$.
Consider now the UPS scheme. Suppose that
$\mv{S}=P\mv{v}_1\mv{v}_1^H$ and $\rho=Q/(h_1P)$. It then follows
that for UPS, the harvested power is equal to
$\rho\mathtt{tr}\left(\mv{H}\mv{S}\mv{H}^H\right)=Q/(h_1P)\times
(h_1P)=Q$, and the achievable rate  $R_{\rm UPS}$ is equal to
$\log\left|\mv{I}+(1-\rho)\mv{H}\mv{S}\mv{H}^H\right|=\log(1+(1-Q/(h_1P))\times
h_1P)=\log(1+h_1P-Q)=R_{\rm TS}$. Thus, we prove that
$\mathcal{C}_{\rm R-E}^{\rm UPS}(P)\supseteq \mathcal{C}_{\rm
R-E}^{\rm TS_2}(P)$. Since from the proof of the former part of
Proposition \ref{proposition:TS UPS compare} we have that
$\mathcal{C}_{\rm R-E}^{\rm UPS}(P)\subseteq \mathcal{C}_{\rm
R-E}^{\rm TS_2}(P)$, it thus follows that $\mathcal{C}_{\rm
R-E}^{\rm UPS}(P)=\mathcal{C}_{\rm R-E}^{\rm TS_2}(P)$. The ``if''
part is proved.

Second, we prove the ``only if'' part by contradiction. Suppose that
$\mathcal{C}_{\rm R-E}^{\rm UPS}(P)=\mathcal{C}_{\rm R-E}^{\rm
TS_2}(P)$ for $P=1/h_2-1/h_1+\delta$ with arbitrary $\delta>0$ and
thus $P> (1/h_2-1/h_1)$. For any $0<Q<h_1\delta$, the corresponding
$R_{\rm TS}$ on the boundary of $\mathcal{C}_{\rm R-E}^{\rm
TS_2}(P)$ is given by
$\max_{\smv{S}_1}\log|\mv{I}+\mv{H}\mv{S}_1\mv{H}^H|$, where
$\mathtt{tr}(\mv{S}_1)=P-Q/h_1>P-\delta=1/h_2-1/h_1$. Since
$\mathtt{tr}(\mv{S}_1)>1/h_2-1/h_1$, from the WF power allocation in
(\ref{eq:WF}) it follows that the optimal rank of $\mv{S}_1$ must be
greater than one. Consider now the UPS scheme. Since
$\mathcal{C}_{\rm R-E}^{\rm UPS}(P)=\mathcal{C}_{\rm R-E}^{\rm
TS_2}(P)$, it follows that for the same $Q$ as in the TS scheme, the
maximum rate for the UPS scheme is $R_{\rm UPS}= R_{\rm TS}$. From
the proof of the former part of Proposition \ref{proposition:TS UPS
compare}, we know that to achieve $Q$ with UPS, $\rho\geq Q/(h_1P)$
with the equality only when the transmit covariance is rank-one.
Furthermore, since $R_{\rm UPS}$ is equal to
$\log\left|\mv{I}+\mv{H}\mv{S}'\mv{H}^H\right|$ with
$\mathtt{tr}(\mv{S}')=(1-\rho)P\leq P-Q/h_1$, where the equality
holds only if $\mv{S}'$ is rank-one. Since $R_{\rm UPS}= R_{\rm
TS}$, it thus follows that $\mv{S}'=\mv{S}_1$ and
$\mathtt{tr}(\mv{S}')= P-Q/h_1$ must hold at the same time. Since
these two equalities require that $\mv{S}'$ have the rank greater
than one and equal to one, respectively, they cannot hold at the
same time. Thus, $R_{\rm UPS}= R_{\rm TS}$ cannot be true and the
presumption that $\mathcal{C}_{\rm R-E}^{\rm
UPS}(P)=\mathcal{C}_{\rm R-E}^{\rm TS_2}(P)$ does not hold. The
``only if'' part is proved.

Combining the proofs for both the ``if'' and ``only if'' parts, the
latter part of Proposition \ref{proposition:TS UPS compare} is thus
proved.

\end{document}